\documentclass[aps,pra,reprint,floatfix,superscriptaddress]{revtex4-2}
\usepackage{CJK}
\usepackage[utf8]{inputenc}

\usepackage{hyperref}
\usepackage{braket,amsmath,amssymb,mathtools,amsthm}



\newtheorem{theorem}{Theorem}
\newtheorem{lemma}{Lemma}
\newtheorem{proposition}[theorem]{Proposition}


\begin{document}
\begin{CJK*}{GB}{} 
\title{A Comparison of Quantum Walk Implementations on NISQ Computers}
\author{Konstantinos Georgopoulos}
\email{k.georgopoulos2@newcastle.ac.uk}
\affiliation{School of Computing, Newcastle University, Newcastle-upon-Tyne, NE4 5TG, United Kingdom}
\author{Clive Emary}
\affiliation{Joint Quantum Centre Durham-Newcastle, School of Mathematics, Statistics and Physics, Newcastle University, Newcastle-upon-Tyne, NE1 7RU, United Kingdom}
\author{Paolo Zuliani}
\email{paolo.zuliani@newcastle.ac.uk}
\affiliation{School of Computing, Newcastle University, Newcastle-upon-Tyne, NE4 5TG, United Kingdom}
\date{\today}

\begin{abstract}
    This paper explores two circuit approaches for quantum walks: the first consists of generalised controlled inversions, whereas the second one effectively replaces them with rotation operations around the basis states. We show the theoretical foundation of the rotational implementation. The rotational approach nullifies the large amount of ancilla qubits required to carry out the computation when using the inverter implementation. Our results concentrate around the comparison of the two architectures in terms of structure, benefits and detriments, as well as the computational resources needed for each approach. We show that the inverters approach requires exponentially fewer gates than the rotations but almost half the number of qubits in the system. Finally, we execute a number of experiments using an IBM quantum computer. The experiments show the effects of noise in our circuits. Small two-qubit quantum walks evolve closer to our expectations, whereas for a larger number of steps or state space the evolution is severely affected by noise.
\end{abstract}

\maketitle
\end{CJK*}

\section{Introduction}\label{sec:intro}
This paper considers the quantum mechanical analogue of a random walk on the line, or \textit{quantum walk} \cite{Kempe-2003}. Here, the evolution of the walker is guided by a balanced quantum coin in superposition. If we imagine a quantum particle that moves freely between adjacent discrete points on a line, see Figure \ref{fig:qwalkdyn}, then at each time step, the balanced coin is flipped and the quantum state undergoes a unitary transformation, otherwise called shift. Then the particle progresses according to the state of the quantum coin, thus evolving the walk.

By evolving the state of the system in a superposition, the walker/particle can seemingly follow all possible paths, propagating quadratically further as a function of the coin flips than in the classical case \cite{Szegedy}. Significant here is also the effect of quantum interference in quantum walks, where two separate paths leading to the same point may be out of phase and cancel one another.

Quantum walks have the potential to speed up classical algorithms that are based on random walks \cite{Szegedy,Richter-2007,Shenvi-2003}. There have been many systematic studies on this subject area and many of them can lead to further in-depth analysis of more advanced quantum algorithms, such as quantum Metropolis, quantum Markov chains or quantum Monte Carlo methods \cite{Chow,Montanaro-2015,Temme,Szegedy,Dhahri-2019}. An early work from Aharonov et al. \cite{Aharonov-2001} proves that, in the context of quantum walks on graphs, the walker's propagation in the quantum case is quadratically faster than the classical random walk. The efficiency of quantum walks has been exploited in various cases in order to construct quantum algorithms \cite{Ambainis-2014,Magniez-2007} and speedup classical methods \cite{Childs-2004,Montanaro-2018}, sometimes even exponentially \cite{Childs-2002,Kempe-2005}. Quantum walks have also been realised in a number of physical systems, including photons \cite{Sansoni-2012,Rakovszky-2015,Mugel-2016,Grafe-2020,Broome-2010,Schreiber-2010}, cold atoms \cite{Karski-2009,Robens-2015} and trapped ions \cite{Schmitz-2009}.

\begin{figure}[!tb]
    \begin{center}
        \includegraphics[width=8cm]{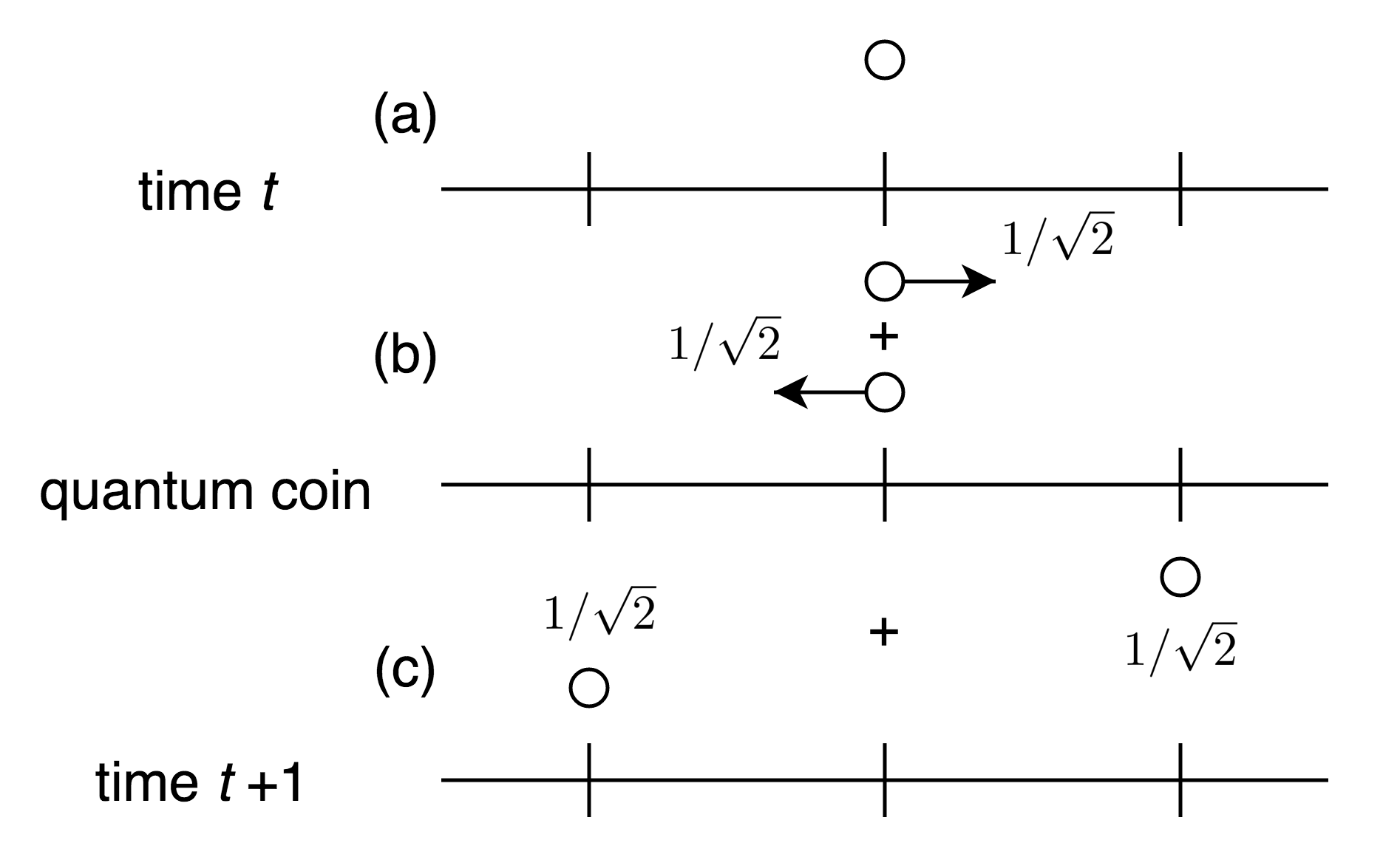}
    \end{center}
    \caption{Dynamics of the balanced quantum walk. (a) The walk begins at time $t$. (b) The flip of the quantum coin, where the particle is in equal superposition to go left or right. (c) The particle moves to generate the superposed state at time $t+1$.}
    \label{fig:qwalkdyn}
\end{figure}

The results of our work are concentrated around the comparison of two architectural approaches to quantum walks: (i) the generalised inverters approach \cite{EffWalk} and (ii) the rotational approach \cite{Gates,Liu-2008}, with no quantum walk implementation of the latter, to the best of our knowledge. Through an analysis of the computational resources necessary we find that each architecture shows opposite advantages and disadvantages. The generalised inverters approach shows smaller execution times and requires exponentially less gates whereas the rotational approach lowers the number of qubits necessary for the implementation. This result is important within the NISQ era \cite{Preskill-2018}, with the inability of existing quantum computers to reliably run even average-scale computations.

Finally, we examine the effects of noise for each implementation on a real quantum machine. Our experiments show that the results follow the theoretical expectations for small two-qubit systems, but quickly deviate for quantum walks with a three-qubit state space or larger. For the implementation we make use of IBMQ's Qiskit development kit \cite{Qiskit,IBMQExp} in order to simulate and execute the quantum circuits. Currently, there is a lot of interest in running quantum applications on NISQ systems \cite{Blank-2020,Huang_2020,Cross_2019,Martn-2019,Narenda-2017}. On the other hand, there are very few studies of quantum walks on hardware, with an early notable work by \cite{Ryan-2005} representing the first implementation of a discrete coined quantum walk implemented on a quantum-information processor. Additionally, in a recent work \cite{Acasiete-2020} the authors examine the implementation of discrete-time quantum walks on cycles, two-dimensional lattices and complete graphs, which are then executed on a different IBM quantum computer.

The paper is structured as follows. First we give a theoretical overview of the quantum walk in Section \ref{sec:qwalk}. Moving on, we discuss in Section \ref{sec:3} how a quantum walk can be implemented using generalised inverters and rotations around the basis states. Section \ref{sec:impl} offers an analysis of the implementation characteristics of each approach and a rigorous comparison of the two, before we outline the results from experimenting with both implementations in Section \ref{sec:res}. Finally, we discuss and conclude our research in Section \ref{sec:discussion}.

\section{Quantum Walks}\label{sec:qwalk}
Throughout this paper we are occupied with a particle's one-dimensional discrete-time quantum walk on a finite cycle with $N$ states and an arbitrary number of steps \cite{qWALK-11}. This quantum walk can be described as the repeated application of a unitary evolution operator, $U$. This operator acts on a Hilbert space $\mathcal{H}^{S} \otimes \mathcal{H}^{C}$, where $\mathcal{H}^{C}$ is the Hilbert space associated with a quantum coin and $\mathcal{H}^{S}$ with the state space of the walk. In order to describe the quantum walk we define the operator $U$ as
\begin{equation}
    U = S \cdot (I \otimes C) \label{walkunitary}
\end{equation}
where $S$ is the shift operator describing the walker's propagation and $C$ is the quantum coin operator. We take the quantum coin to be a Hadamard operator with well-known matrix representation
\begin{equation*}
    C = H = \frac{1}{\sqrt{2}}
    \begin{pmatrix}
        1 & 1 \\
        1 & -1
    \end{pmatrix}.
\end{equation*}

The walker, after each flip of the coin can either \textit{increase} or \textit{decrease} its position by a step of $1$. This is defined by the shift operator $S$ and can be described via increment and decrement functions, as demonstrated by \cite{EffWalk}. The mathematical description of the shift operator is
\begin{equation}
    S = S^{-} \otimes \ket{0}\bra{0} + S^{+} \otimes \ket{1}\bra{1} \label{eq:S}
\end{equation}
where $S^{+}\ket{x} \rightarrow \ket{x+1}$ moves the walker one step to the right, increasing its position, and $S^{-}\ket{x} \rightarrow \ket{x-1}$ to the left, decreasing its position.

A very important aspect of a discrete-time quantum walk's evolution is what we call the \textit{modularity property} \cite{qWALK-10}. This property expresses the relationship between the parity of the number of coin-flips, the initial position and the resulting states of the walker. For example, if the particle starts on an even position (including $\ket{0}$) then, after an odd/even number of steps the position of the particle will be a superposition of $N/2$ odd/even states.

Another characteristic of this quantum walk is its \textit{asymmetry}. After the evolution of the particle position, the probability of each state to be measured may not be the same. The reason for this phenomenon is quantum interference, which can be either constructive or destructive. This can affect the quantum walk for more than one iterations of the shift operator, $S$. Precisely, the leftwards path ($S^{-}$) interferes more destructively whereas the rightwards path undergoes more constructive interference. In other words, the asymmetry is the result of the Hadamard coin introducing bias in the path selection \cite{Krovi-2007}.

\section{Increment and Decrement Functions}\label{sec:3}
One way to implement the functions $S^{\pm}$, as expressed in Equation \eqref{eq:S}, is to use generalised inverter gates, demonstrated by \cite{EffWalk}. This implementation is discussed in Section \ref{subsec:gencnot} below. In Section \ref{subsec:rot}, we present the second approach that uses rotations around the basis states \cite{Gates,Liu-2008}.

\subsection{Using Generalised Inverter Gates}\label{subsec:gencnot}
We refer to generalised controlled operations as those controlled by more than one qubit. In this case, the operation in question will occur if and only if all the control qubits are in state $\ket{1}$. The most common example and one that is used extensively here is the three-qubit Toffoli gate, in which the target qubit will be inverted only if both the control qubits are in state $\ket{1}$.

The generalised \verb|CNOT| gates can be used to construct quantum circuits that implement the increment and decrement functions. As shown by \cite{EffWalk}, the increment and decrement functions can be realised with a single quantum circuit, but with opposite control logic. Figure \ref{fig:incredecrcirc}(a) shows the higher level circuit for one iteration of the quantum walk for a state space of arbitrary size. The general implementation of the increment and decrement circuits is shown in Figure \ref{fig:incredecrcirc}(b). 

The realisation of the individual circuits in Figure \ref{fig:incredecrcirc}(b) for a small quantum walk on a $N=8$-cycle using elementary and Toffoli gates is shown in Figure \ref{fig:inccirc}(a-b).

\begin{figure}[!ht]
    \begin{tabular}{c}
          \includegraphics[width=5cm]{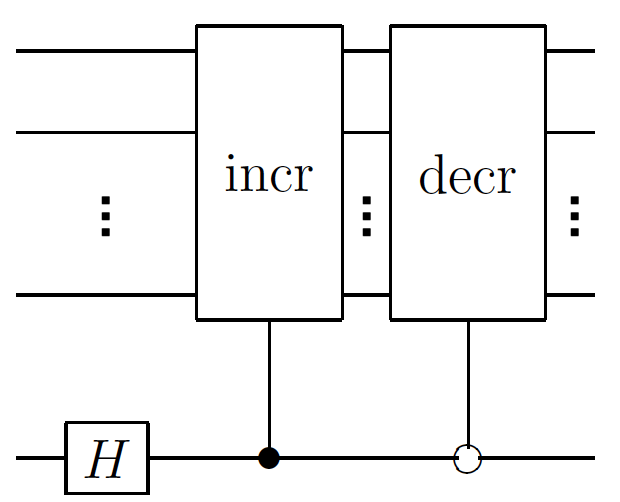} \\
          (a) \\
          \includegraphics[width=8.5cm]{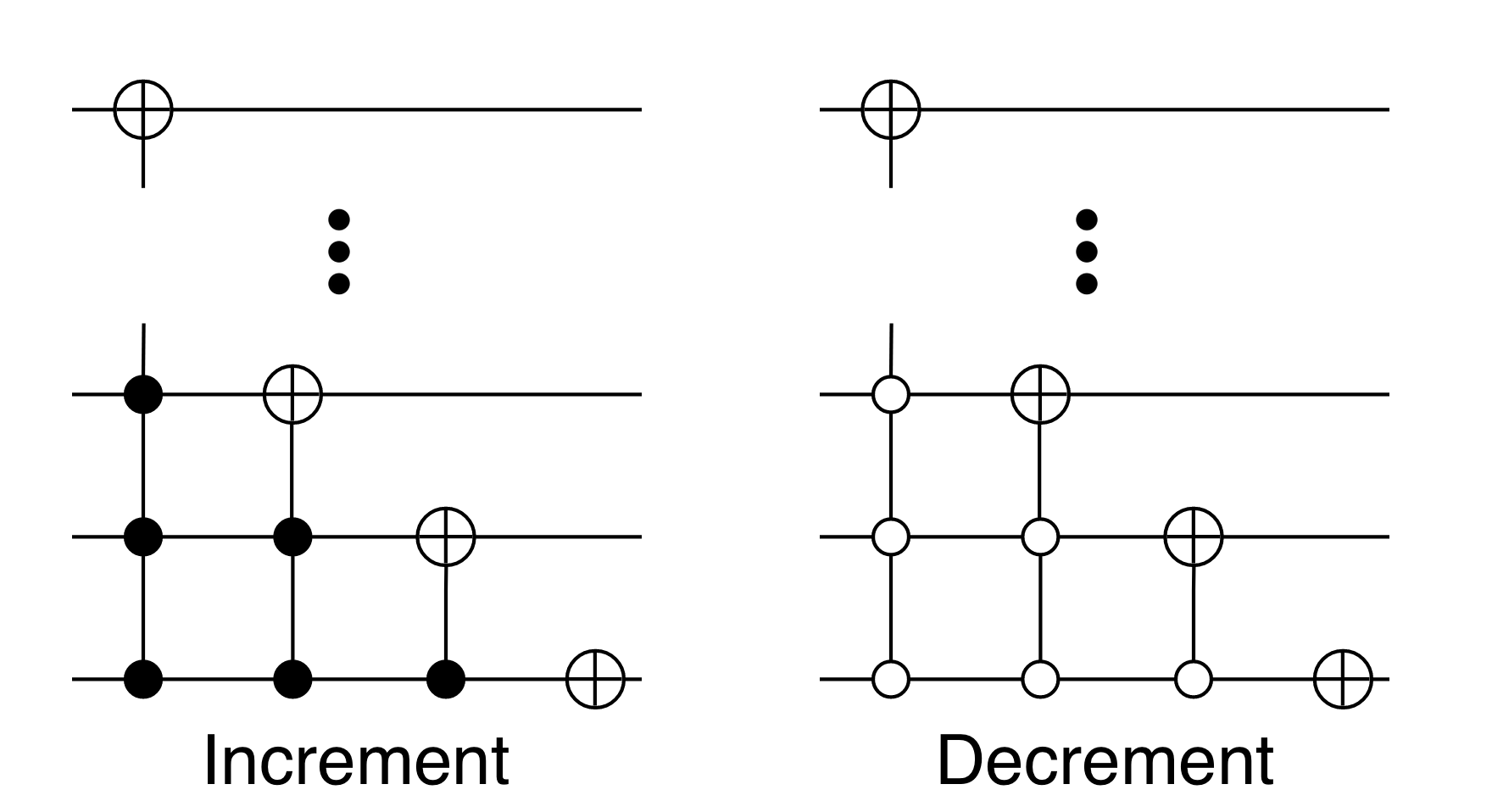} \\
          (b)
    \end{tabular}
    \caption{(a) Implementation of one step for the quantum walk of a particle. (b) Quantum circuits for increment and decrement operations. A filled control circle means that the control qubits have to be in state $\ket{1}$ in order for the operation to occur. An empty control circle means they have to be in state $\ket{0}$.}
    \label{fig:incredecrcirc}
\end{figure}



Both the increment and decrement circuits of these two figures look more complicated than their respective schematic in Figure \ref{fig:incredecrcirc}(b). The reason is the lack of direct implementation of any generalised inverters other than the Toffoli gate in IBMQ's Qiskit. Any inverter gate with more than two control qubits requires intermediate computations stored in ancilla qubits. This leads to a significant increase of the workspace (i.e. the number of qubits needed for the computation) that grows with the size of the state space. Precisely, a generalised \verb|CNOT| gate with $n_{c}$ control qubits requires additional $n_{c}-1$ ancilla qubits for the implementation (refer to Appendix \ref{ap:cnx}). For example, considering IBMQ's $15$ qubit Melbourne machine, we can implement a quantum walk on a cycle with at most $2^{8}=256$ states.



\begin{figure}[!ht]
    \begin{tabular}{c}
          \includegraphics[width=8.4cm]{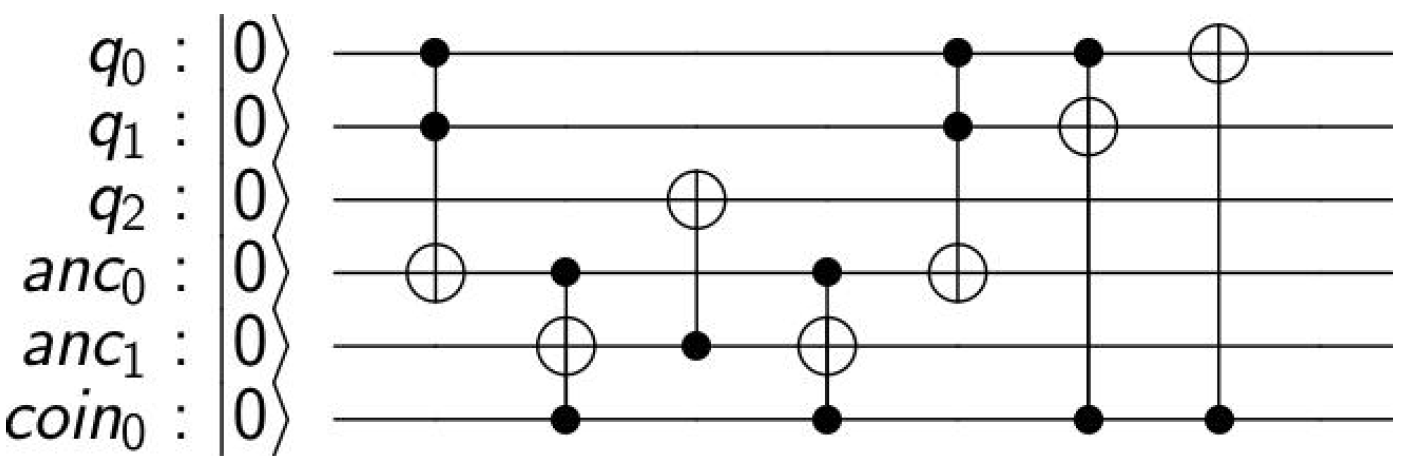} \\
          (a) \\
          \includegraphics[width=8.4cm]{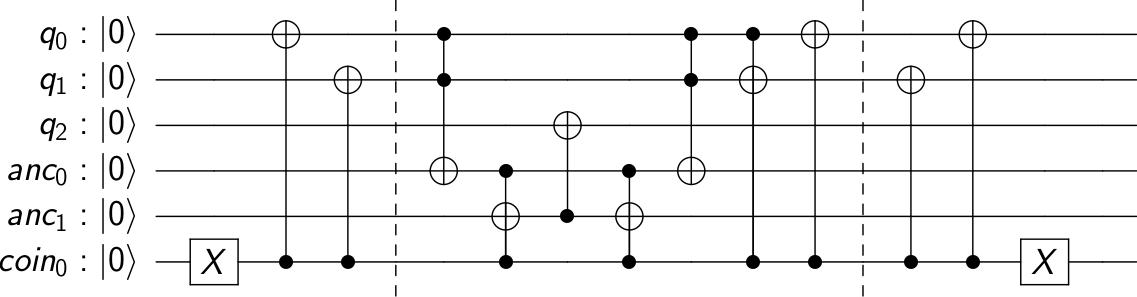} \\
          (b)
    \end{tabular}
    \caption{(a) Increment circuit. The three qubit $q_{n}$ register is the state space of the quantum walk, the coin register $coin_{0}$ represents the Hadamard coin. The ancilla qubits used for the computation are $anc_{0}$ and $anc_{1}$. (b) Decrement circuit. Important here is the need to invert all the control qubits (including the coin) at the start of the computation and uncompute them at the end.}
    \label{fig:inccirc}
\end{figure}

\subsection{Using Rotations}\label{subsec:rot}
In this section we present a solution that uses \textit{rotations} around the basis states \cite{Gates,Liu-2008} to implement the quantum walk without an ancilla register, thus lowering the vast increase in computational resources resulting from adding qubits to the system. Referring back to the previous section's example, we can use the $15$-qubit Melbourne machine to run a quantum walk with rotations on a cycle with $2^{14}=16,384$ positions, a significant rise compared to the generalised \verb|CNOT| approach.

Another benefit of the rotational approach holds when simulating quantum walks on classical machines. As the size of the state space increases exponentially with the number of qubits, classical computers very quickly start struggling to cope with the size of the workspace. The rotational implementation offers a way around this problem.

Before we show the quantum circuit for the rotational approach, we need to start by expressing two very important lemmas, first introduced and proven by Barenco et al. \cite{Gates}.

\begin{lemma}\label{lemma:rot}
    \normalfont{\textbf{\cite{Gates}}} \textit{For any unitary operator $W$ there exist operators $\Phi$, $A$, $B$ and $C$ such that $ABC=I$ and $\Phi AXBXC = W$, where $\Phi$ is a phase operator of the form $\Phi=e^{i\delta}\times I$ with $\delta \in \mathbb{R}$, $X$ is the Pauli-$X$ and $I$ the identity matrix.}
\end{lemma}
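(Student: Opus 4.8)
The plan is to reduce the statement to the standard Euler-angle (ZYZ) decomposition of a single-qubit unitary and then engineer the three factors so that they simultaneously telescope to the identity and reproduce $W$ once the two $X$ gates are inserted. Note first that the single inserted $X$ forces $W$ to be a $2\times 2$ (single-qubit) unitary, and I read the lemma in that setting. The two facts I would lean on are: (i) every such $W$ can be written as $W = e^{i\delta}\,R_z(\alpha)\,R_y(\theta)\,R_z(\beta)$ for suitable real angles $\delta,\alpha,\theta,\beta$, where $R_y$ and $R_z$ are the usual rotations about the $y$- and $z$-axes; and (ii) the conjugation identities $X R_y(\gamma) X = R_y(-\gamma)$ and $X R_z(\gamma) X = R_z(-\gamma)$, which hold because $X$ anticommutes with both $Y$ and $Z$.

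With these in hand I would set $\Phi = e^{i\delta} I$ and propose the explicit choice
\begin{equation*}
    A = R_z(\alpha)\,R_y(\theta/2), \qquad
    B = R_y(-\theta/2)\,R_z\!\bigl({-}(\alpha+\beta)/2\bigr), \qquad
    C = R_z\!\bigl((\beta-\alpha)/2\bigr).
\end{equation*}
The first check is $ABC=I$: the two $R_y(\pm\theta/2)$ factors meet and cancel, leaving a product of $R_z$ rotations whose angles sum to $\alpha-(\alpha+\beta)/2+(\beta-\alpha)/2=0$, so the product is $R_z(0)=I$. The second check is $AXBXC = R_z(\alpha)R_y(\theta)R_z(\beta)$: inserting $XX=I$ lets me isolate the middle block $XBX$ and apply the conjugation identities (ii) factor-by-factor to get $XBX = R_y(\theta/2)\,R_z((\alpha+\beta)/2)$. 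Substituting, the two $R_y(\theta/2)$ factors combine to $R_y(\theta)$ and the remaining $R_z$ angles add to $(\alpha+\beta)/2+(\beta-\alpha)/2=\beta$, yielding exactly $R_z(\alpha)R_y(\theta)R_z(\beta)$. Multiplying by $\Phi$ then gives $\Phi\,AXBXC = W$, as required.

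The genuinely nontrivial ingredient is the existence of the Euler decomposition in (i); everything afterward is bookkeeping in the rotation angles. I would either cite it as standard or establish it quickly by observing that a general $2\times2$ unitary carries four real parameters (one global phase together with the three parameters of $SU(2)$), and that $R_z(\alpha)R_y(\theta)R_z(\beta)$ ranges over all of $SU(2)$ as the three angles vary. The one subtlety worth flagging is the choice of the angle splits in $A$, $B$, $C$ above: they are not unique, and one has to solve a small linear system in the angles so that the $R_y$ halves telescope while the $R_z$ parts are apportioned to give $I$ in the order $ABC$ and $\beta$ in the order $AXBXC$. Once that assignment is written down, however, the entire proof collapses to the two short angle computations sketched here, so I expect the only real effort to lie in cleanly justifying (i) and stating the conjugation rules (ii).
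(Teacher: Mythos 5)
The paper does not prove this lemma itself---it quotes it as an established result of Barenco et al.---and your proof is correct and is essentially the argument given in that cited source: the ZYZ Euler decomposition $W=e^{i\delta}R_z(\alpha)R_y(\theta)R_z(\beta)$ together with the choice $A=R_z(\alpha)R_y(\theta/2)$, $B=R_y(-\theta/2)R_z(-(\alpha+\beta)/2)$, $C=R_z((\beta-\alpha)/2)$. Both of your angle checks ($ABC=I$, and $AXBXC=R_z(\alpha)R_y(\theta)R_z(\beta)$ via the conjugation identities $XR_y(\gamma)X=R_y(-\gamma)$ and $XR_z(\gamma)X=R_z(-\gamma)$) go through exactly as stated, so there is nothing substantive to add.
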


What we learn from Lemma \ref{lemma:rot} is that we can express any unitary operator and, for our case, a \verb|NOT| gate, as a sequence of operators $\Phi AXBXC$. The existence of the operator $\Phi$ compensates for the fact that the \verb|NOT| gate is not an $\text{SU}(2)$. Thus, we can narrow our efforts down to finding the appropriate $\Phi,A,B,C$ operators that suit our specific needs.

We start by presenting a quantum circuit that implements a Toffoli gate, denoted $X_{cc}$, via a number of controlled rotations, as shown in Figure \ref{fig:tofrot}. The specific rotation gates needed can be defined through the unitary matrix $R_{y}(\theta)$ defined as
\begin{equation}
    R_{y}(\theta) = 
    \begin{pmatrix}
        \cos{\theta /2} & -\sin{\theta /2} \\
        \sin{\theta /2} & \cos{\theta /2}
    \end{pmatrix} \label{eq:ry}
\end{equation}
where, in our case, we require $\theta=\pi/2$. The rotation operator can be implemented using IBM's $U_{3}$ gate, as shown in Appendix \ref{ap:u3}.

The next operation is expressed by the unitary operator $R_{z}(\phi)$, given as
\begin{equation}
    R_{z}(\phi) = 
    \begin{pmatrix}
        e^{i\phi/2} & 0 \\
        0 & e^{-i\phi/2}
    \end{pmatrix}. \label{eq:rz}
\end{equation}
Similarly to the rotation operators, we assign $\phi=\pi/2$.

\begin{figure}[!tb]
    \begin{center}
        \includegraphics[width=8.2cm]{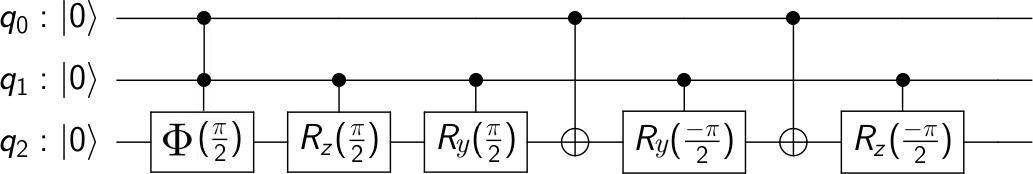}
    \end{center}
    \caption{Quantum circuit implementing a Toffoli gate ($X_{cc}$) using conditioned rotations.}
    \label{fig:tofrot}
\end{figure}

\textit{Finally}, since the inverter gate is not a special unitary (i.e. it doesn't have determinant $1$), there is the need for the additional phase gate $\Phi(\delta)$ defined as
\begin{equation*}
    \Phi(\delta) = \begin{pmatrix} e^{i\delta} & 0 \\ 0 & e^{i\delta} \end{pmatrix}
\end{equation*}
where, for our case, we identify $\delta=-\pi/2$.

Thus, we can now rewrite the rotation operation for an inverter gate, $X$, as
\begin{equation*}
    X \equiv \Phi(\pi/2)R_{z}(\pi/2)R_{y}(\pi/2)XR_{y}(-\pi/2)XR_{z}(-\pi/2)
\end{equation*}
where $A=R_{z}(\pi/2)R_{y}(\pi/2)$, $B=R_{y}(-\pi/2)$ and $C=R_{z}(-\pi/2)$. By modifying the operators to accommodate for the right matrix dimensions, we can implement the Toffoli gate.

The next step is to generalise this quantum circuit so it can accommodate more than two control qubits, i.e. create a generalised \verb|CNOT| using rotations. In order to do this we can use another lemma from Barenco et al. \cite{EffWalk}. In this context, we need to introduce the notation $\wedge_{n-1}(U)$ as used by \cite{Gates,Liu-2008}. For any unitary matrix $U=\left( \begin{array}{ll}{u_{00}} & {u_{01}} \\ {u_{10}} & {u_{11}}\end{array}\right)$ and $m \in\{0,1,2,\dots\}$ we define the $(m+1)$-bit $(2^{m+1}$-dimensional) operator $\wedge_{m}(U)$ as
\begin{equation*}
    \wedge_{m}(U)(\ket{x_{1},\dots,x_{m},y})=
\end{equation*}
\begin{equation*}
    \begin{aligned}
        \left\{\begin{array}{ll}{u_{y 0}\left|x_{1}, \ldots x_{m}, 0\right\rangle + u_{y 1}\left|x_{1}, \ldots, x_{n}, 1\right\rangle} & {\text {if} \wedge_{k=1}^{m} x_{k}=1} \\ {\left|x_{1}, \ldots, x_{m}, y\right\rangle} & {\text {if} \wedge_{k=1}^{m} x_{k}=0}\end{array}\right.
    \end{aligned}
\end{equation*}
where $\wedge_{k}$ denotes the \verb|AND| operation of the relevant $k$ values.

\begin{lemma}\label{lemma:genrot}
    \normalfont{\textbf{\cite{Gates}}} \textit{For any unitary $W$, a $\wedge_{n-1}(W)$ gate can be simulated by a network of rotation and phase operators, as shown in Figure \ref{fig:gentofrot}, with $\Phi$, $A$, $B$ and $C$ as in Lemma \ref{lemma:rot}.}
\end{lemma}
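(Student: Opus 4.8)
The plan is to reduce the statement to the single-control identity of Lemma \ref{lemma:rot} and then propagate it through the $n-1$ control lines. First I would recall the algebraic decomposition $W = \Phi A X B X C$ with $ABC = I$, which already expresses the target operation purely in terms of the rotation-and-phase operators $A$, $B$, $C$ (each a product of $R_y$ and $R_z$ factors as displayed in the main text) together with the Pauli-$X$. The network of Figure \ref{fig:gentofrot} is then obtained by promoting every bare $X$ in this product to a generalised inverter $\wedge_{n-1}(X)$ acting on the same target but now conditioned on all $n-1$ control qubits, while leaving $A$, $B$, $C$ as unconditioned single-qubit gates on the target and attaching a suitably conditioned copy of the phase $\Phi$.

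The heart of the argument is a two-case verification that this network equals $\wedge_{n-1}(W)$, checked on the computational basis. On the subspace in which all controls are in $\ket{1}$ (i.e. $\wedge_{k=1}^{n-1} x_k = 1$), each $\wedge_{n-1}(X)$ fires as an ordinary $X$ on the target, so the target sees exactly $\Phi A X B X C$, which equals $W$ by Lemma \ref{lemma:rot}, and the conditioned phase supplies the matching $e^{i\delta}$ factor. On the complementary subspace at least one control is in $\ket{0}$, so every $\wedge_{n-1}(X)$ acts as the identity and the target sees $A\,I\,B\,I\,C = ABC = I$, while the conditioned phase is inert. This is precisely the action defining $\wedge_{n-1}(W)$, so the network and the gate agree on every basis state and hence as operators.

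The final step is to confirm that the building blocks just introduced are themselves realisable with rotations and phases alone. The single-qubit factors $A$, $B$, $C$ are rotations by construction, so it remains to simulate the multi-controlled inverters $\wedge_{n-1}(X)$ and the conditioned phase. I would do this by induction on the number of controls, the base case $\wedge_2(X)$ being the Toffoli-from-rotations circuit already exhibited in Figure \ref{fig:tofrot}. The inductive step splits a $\wedge_{n-1}(X)$ into controlled gates with fewer controls -- for instance via controlled square-roots $\wedge_1(V)$, $\wedge_1(V^{\dagger})$ with $V^2 = X$ together with two $\wedge_{n-2}(X)$ gates -- each of which is handled by the same $\Phi A X B X C$ identity, so that the recursion bottoms out entirely in $R_y$, $R_z$ and $\Phi$ operators and, crucially, uses no ancilla register.

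I expect the main obstacle to be the careful bookkeeping of the phase operator $\Phi = e^{i\delta}\times I$. Because the inverter is not a special unitary, $\Phi$ is a genuine nontrivial factor rather than an ignorable global phase, and in the controlled construction it must be converted into a relative phase applied exactly when the control register is in its all-ones configuration. Ensuring that these conditioned phases compose correctly through each level of the recursion -- firing on the intended configuration and cancelling everywhere else -- is the delicate part; the rotation angles are then pinned down by simultaneously matching $ABC = I$ and $\Phi A X B X C = X$, exactly as in the two-control case of Figure \ref{fig:tofrot}.
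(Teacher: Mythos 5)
The paper never actually proves this lemma---it is imported by citation from Barenco et al.\ \cite{Gates} (their Lemma~7.5), so there is no in-paper argument to match yours against; the fair comparison is with the cited construction that Figure~\ref{fig:gentofrot} reproduces. Your two-case verification is correct and is indeed the heart of the matter: on the all-ones control subspace the target sees $\Phi AXBXC=W$, and otherwise it sees $ABC=I$, which is exactly the defining action of $\wedge_{n-1}(W)$. Where you diverge from the standard network is in how the controls are attached. You condition both inverters on all $n-1$ controls and attach a phase conditioned on the full control register, then invoke a separate square-root recursion to dismantle the resulting $\wedge_{n-1}(X)$ gates; the Barenco network instead builds the recursion into the circuit from the start---singly-controlled $V$ and $V^{\dagger}$ on the target with $V^{2}=W$, two $\wedge_{n-2}(X)$ gates targeting the last control, and one $\wedge_{n-2}(V)$---so that the $\Phi AXBXC$ identity of Lemma~\ref{lemma:rot} is only ever applied to singly-controlled gates and the phase correction stays a single-qubit diagonal gate on one control. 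Your route buys a cleaner top-level identity but defers two pieces of bookkeeping that should not be left implicit: (i) your conditioned $\Phi$ is itself a $\wedge_{n-2}$(phase) gate on the control register, i.e.\ another multi-controlled operation that needs the same machinery to realise without ancillas; and (ii) the square-root step produces multi-controlled $V=\sqrt{X}$ gates, not just multi-controlled inverters, so the induction must run over $\wedge_{m}(U)$ for general unitary $U$ rather than over $\wedge_{m}(X)$ alone. Both points are resolvable (and are resolved in \cite{Gates}), so the proposal is sound in substance, but as written it understates the recursion it leans on.
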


\begin{figure}[!ht]
    \begin{center}
        \includegraphics[width=8cm]{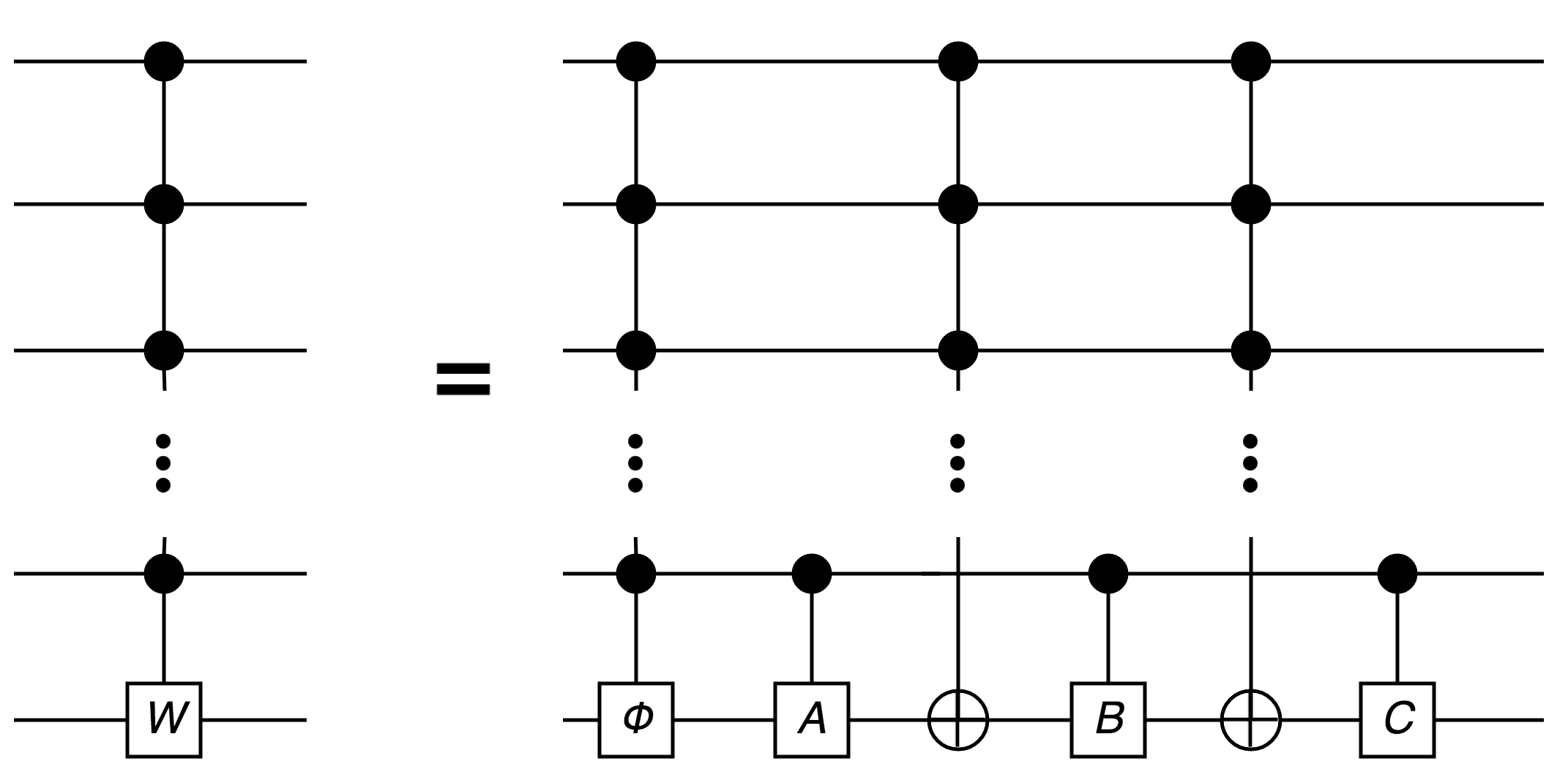}
    \end{center}
    \caption{Generalised rotational network that implements a unitary controlled by an arbitrary number of control qubits.}
    \label{fig:gentofrot}
\end{figure}

Lemma \ref{lemma:genrot} describes a way to expand any generalised unitary with an arbitrary number $m$ of control qubits to a network of controlled rotations and generalised \verb|CNOT|s of the form $\Phi AXBXC$. It is easy to see that, if $W=X$, we can iteratively expand each one of the generalised \verb|CNOT| gates to such a network. The expansion will stop when the generalised inverter gates end up being regular Toffoli gates. After the transformation of the initial approach to rotation operations, the $2\times 2$ operator $W$ applied to the target qubit is the regular inverter
\begin{equation}
    W = X =
    \begin{pmatrix}
        0 & 1 \\
        1 & 0
    \end{pmatrix}, \label{eq:W}
\end{equation}
with the dimensions of the matrix representation defined according to the dimensionality of the workspace.

Thus, we can produce a quantum circuit that implements a generalised \verb|CNOT| gate with an arbitrary number of control qubits without depending on the use of any ancilla qubits. This logic can be applied to any unitary operator \cite{Gates,Liu-2008}. 

\begin{figure}[!t]
    \centering
    \includegraphics[width=8.6cm]{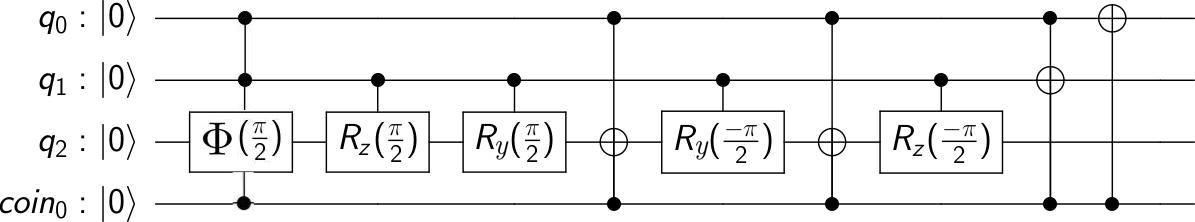}
    \caption{Rotational implementation of an increment circuit for a three qubit state space and one qubit coin.}
    \label{fig:incrot}
\end{figure}

We can now integrate this implementation to the increment and decrement circuits. We substitute the generalised \verb|CNOT| gates with the network described in Lemma \ref{lemma:genrot}. Any Toffoli, \verb|CNOT| or inverter \verb|X| gate remains the same. A visualisation of an increment quantum circuit on $4$ qubits is shown in Figure \ref{fig:incrot}. The decrement circuit will follow similar logic with the difference that all the control qubits have to be inverted at the start of the computation.

\section{Implementation Characteristics}\label{sec:impl}
For our experiments we implement a quantum walk on an $N$-cycle. This means that the walker can be in $N$ different states and needs $n=\log N$ qubits to represent it. One additional single-qubit quantum register is needed for the quantum coin and, in the case of the generalised inverters approach, an ancilla register. We find that the number of qubits needed for the inverters approach increases as $2\log N$, whereas for the rotational approach as $1+\log N$.

The difference in the size of the workspace between the two approaches increases linearly with the state space of the walk, i.e. as we move onto larger cycles. On the other hand, the rotational circuit needs an increasingly larger number of gates than the generalised \verb|CNOT| approach. This is shown with the following propositions.

\begin{proposition}\label{prop:gatescnot}
    The number of gates that participate in the generalised inverter implementation of the quantum walk circuit increases polylogarithmically with the size of the state space, $N$, as $\mathcal{O}(\log^{2}{N})$.
\end{proposition}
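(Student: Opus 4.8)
The plan is to count the elementary gates by decomposing the one-step circuit $U = S \cdot (I \otimes C)$ into its constituent generalised inverters and summing their individual costs. First I would fix the workspace: the state register holds $n = \log N$ qubits and there is a single coin qubit. The coin operator $C = H$ contributes only a constant number of gates, so the total count is dominated by the shift operator $S$, i.e. by the increment and decrement circuits of Figure \ref{fig:incredecrcirc}(b).

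Next I would describe the internal structure of the increment circuit. Incrementing an $n$-qubit register is a cascade of multi-controlled inversions: the least significant qubit is flipped unconditionally, the next is flipped controlled on the first, and in general the $j$-th state qubit is flipped controlled on all $j-1$ qubits below it, so that across the register the gates carry $0, 1, \ldots, n-1$ controls. Because the whole shift is conditioned on the coin (per Equation \eqref{eq:S}), each of these gates acquires one further control, giving exactly one generalised \verb|CNOT| $\wedge_k(X)$ for each $k \in \{1, \ldots, n\}$. The decrement circuit has the same structure up to inverting the control qubits at the start and uncomputing them at the end, which adds only $O(n)$ single-qubit \verb|X| gates.

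The key quantitative input is the cost of a single $\wedge_k(X)$ gate. As recorded in Appendix \ref{ap:cnx}, a generalised \verb|CNOT| with $k$ controls is realised with $k-1$ ancillas and, crucially, with a number of Toffoli and elementary gates that is linear in $k$, i.e. $O(k)$. Granting this, the total gate count of one increment (or decrement) circuit is $\sum_{k=1}^{n} O(k) = O(n^2)$, and adding the constant coin contribution and the $O(n)$ control inversions leaves the leading term unchanged. Substituting $n = \log N$ yields $O(\log^2 N)$ gates per step; for a fixed number of steps the total merely multiplies by a constant, so the dependence on $N$ is still $O(\log^2 N)$.

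I expect the main obstacle to be this second ingredient rather than the summation: one must justify that the ancilla-assisted decomposition of $\wedge_k(X)$ really is linear in $k$, so that no extra factor creeps in. It is precisely this linear-per-gate bound — and not merely the fact that there are $n$ such gates — that produces the quadratic, and hence $\log^2 N$, scaling. Once it is in hand, evaluating $\sum_{k=1}^{n} k = n(n+1)/2$ and collapsing the lower-order terms into the $O(\cdot)$ is entirely routine.
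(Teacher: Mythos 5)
Your proposal is correct and follows essentially the same route as the paper: identify one multi-controlled inverter per control count from $1$ to $\log N$ in each of the increment and decrement circuits, invoke the linear-in-$k$ ancilla-assisted decomposition of $\wedge_k(X)$ from Appendix \ref{ap:cnx} (the paper uses the explicit count $2n_c-1$ where you write $O(k)$), and sum to get $O(\log^2 N)$. The paper evaluates the resulting sum exactly as the sum of the first $\log N$ odd numbers, but this is only a cosmetic difference from your $\sum_k O(k)$ bound.
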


\begin{proof}
    For a state space $N\geq 8$, any gate that needs more than two control qubits is expanded to a network with ancilla qubits. The number of gates necessary for this expansion can be expressed as $2\sum_{n_{c}=3}^{\log{N}}(2n_{c}-1)$, where $n_{c}$ is the number of control qubits necessary for each operation. The additional gates needed will be the inverters with two or less control qubits and the Hadamard gate. 
    For a state space of $N<8$ there will be no operations with more than two control qubits and the number of gates will be simply calculated by the inverters and the Hadamard gate.
    
    Thus, the number of gates for the generalised inverter implementation can be expressed as
    \begin{equation}
        \nu_{\operatorname{c}} = \begin{cases}
            2\sum_{n_{c}=3}^{\log{N}}(2n_{c}-1) + 2\log{N} + 5, \; \text{if} \; N\geq 8 \\ \\
            2\log{N} + 5, \; \text{if} \; 2 \leq N < 8
        \end{cases} \label{eq:gatescnot}
    \end{equation}
    From the above equation we can see that the sum $\sum_{n_{c}=3}^{\log{N}}(2n_{c}-1)$ provides the dominant growth rate. We find that
    \begin{equation*}
        \begin{aligned}
            \sum_{n_{c}=3}^{\log{N}}(2n_{c}-1) &= \sum_{n_{c}=1}^{\log{N}}(2n_{c}-1) - \sum_{n_{c}=1}^{2}(2n_{c}-1) \\
            &= \log^{2}{N} - 4,
        \end{aligned}
    \end{equation*}
    as the sum $\sum_{n_{c}=1}^{\log{N}}(2n_{c}-1) = \log^{2}{N}$ is the known sum of the first $\log{N}$ odd natural numbers and $\sum_{n_{c}=1}^{2}(2n_{c}-1)=4$. Thus, the number of gates increases with the size of the state space, $N$, as $\mathcal{O}(\log^{2}{N})$.
\end{proof}

\begin{proposition}\label{prop:gatesrot}
    The number of gates that participate in the rotational implementation of the quantum walk circuit increases linearly with the size of the state space, $N$, as $\mathcal{O}(N)$.
\end{proposition}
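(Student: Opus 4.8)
The plan is to mirror the structure of the proof of Proposition \ref{prop:gatescnot}, but to track how the cost of a \emph{single} generalised \verb|CNOT| explodes under the recursive rotational expansion of Lemma \ref{lemma:genrot} rather than under the linear ancilla expansion. First I would isolate the basic recurrence. By Lemma \ref{lemma:genrot}, each $\wedge_{n_c}(X)$ gate is rewritten as a network of the form $\Phi A X B X C$, in which $A$, $B$, $C$ and $\Phi$ contribute only a fixed number $\kappa$ of single- and two-qubit rotation and phase gates, while the two inner \verb|X| blocks are themselves generalised \verb|CNOT|s on one fewer control, i.e.\ $\wedge_{n_c-1}(X)$. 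Writing $g(n_c)$ for the number of elementary gates produced by fully expanding a $\wedge_{n_c}(X)$, this yields
\[
    g(n_c) = 2\,g(n_c-1) + \kappa,
\]
with base case $g(2)=\kappa_0$ equal to the constant gate count of the Toffoli network of Figure \ref{fig:tofrot}, at which point the iteration halts.

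Next I would solve this recurrence. It is a standard linear recurrence whose homogeneous part doubles at each level, so $g(n_c) = (\kappa_0+\kappa)2^{\,n_c-2}-\kappa = \mathcal{O}(2^{\,n_c})$; the qualitative point is simply that the number of controls decreases by one per level while the number of sub-gates doubles, forcing growth that is exponential in $n_c$. I would then assemble the whole circuit exactly as in Proposition \ref{prop:gatescnot}: the increment and decrement circuits on an $N=2^{n}$-cycle ($n=\log N$) contain one generalised \verb|CNOT| for each control count $n_c = 1,\dots,\log N$, together with a constant number of Hadamard and coin gates. Summing the per-gate cost,
\[
    \nu_{\operatorname{r}} = \sum_{n_c=1}^{\log N} g(n_c) + \mathcal{O}(1)
    = \mathcal{O}\!\left(\sum_{n_c=1}^{\log N} 2^{\,n_c}\right)
    = \mathcal{O}\!\left(2^{\,\log N + 1}\right) = \mathcal{O}(N),
\]
since the geometric sum is dominated by its top term $2^{\log N}=N$. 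This simultaneously recovers the abstract's claim that the rotational circuit needs exponentially more gates than the $\mathcal{O}(\log^{2} N)$ inverter circuit.

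I expect the main obstacle to be the first step --- pinning down the exact branching factor of the recurrence. The conclusion $\mathcal{O}(N)=\mathcal{O}(2^{\log N})$ hinges on the expansion producing exactly \emph{two} smaller generalised \verb|CNOT|s per level (base $2$), which is consistent with the two \verb|X| blocks in the $\Phi A X B X C$ form. A construction that instead yields three sub-gates, as in some ancilla-free decompositions using $V=\sqrt{X}$, would give $\mathcal{O}(3^{\log N})=\mathcal{O}(N^{\log_2 3})$, a strictly larger exponent. I would therefore read the network of Figure \ref{fig:gentofrot} carefully to confirm the factor of two, and in particular check that the phase term $\Phi$ --- which is conditioned on the controls --- either contributes only lower-order controlled-phase gates or folds into the same recurrence without inflating its base. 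The remaining steps (solving the recurrence and evaluating the geometric sum) are routine and subordinate to this verification.
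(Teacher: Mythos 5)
Your proposal is correct and follows essentially the same route as the paper: the paper writes the gate count as an explicit double sum $\sum_{j=3}^{\log N}\bigl[10\sum_{n_c=j}^{\log N}2^{\,n_c-j}\bigr]$ plus a final-level term, which is exactly the unrolled form of your recurrence $g(n_c)=2g(n_c-1)+\kappa$ (with $\kappa=10$ overhead gates per expansion level and the doubling coming from the two \verb|X| blocks in $\Phi AXBXC$), and both arguments conclude by noting the geometric sum is dominated by its top term $2^{\log N}=N$. Your flagged concern about the branching factor is resolved as you anticipated: the network of Lemma \ref{lemma:genrot} does produce exactly two smaller generalised \verb|CNOT|s per level.
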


\begin{proof}
    For $N\geq 8$ any operation with more than three control qubits will be expanded according to the network of Lemma \ref{lemma:genrot}. It is evident that every inverter as in Figure \ref{fig:gentofrot} will need to be expanded to a rotational network until all gates need only two control qubits. This leads to a number of $\sum_{j=3}^{\log{N}}\left[10\sum_{n_{c}=j}^{\log{N}}(2^{n_{c}-j})\right]$ gates before the last step as well as $\sum_{j=3}^{\log{N}}(2^{2-j+\log{N}})$ gates on the last step of the expansion. For operations with two or less control qubits, there is no need for rotations or expansions.
    
    For $N<8$ there will be no operations with more than two control qubits and the circuit will not need a rotational approach.
    
    The number of gates for the rotational implementation, $\nu_{r}$, can be expressed as
    \begin{equation}
        \nu_{r} = \begin{cases}
            \sum_{j=3}^{\log{N}}\left[(2^{2-j+\log{N}})+10\sum_{n_{c}=j}^{\log{N}}(2^{n_{c}-j})\right] \\ + 2\log{N} + 5, \text{if} \; N \geq 8 \\ \\
            2\log{N} + 5, \; \text{if} \; 2 \leq N < 8
        \end{cases} \label{eq:gatesrot}
    \end{equation}
    From equation \eqref{eq:gatesrot} we find that again the sum provides the dominant growth. In this case it represents the well known sum of a geometric progression, where the largest growth would be given by $2^{\log{N}}=N$. Thus, the number of gates increases linearly with the size of the state space, $N$, as $\mathcal{O}(N)$.
\end{proof}

Liu et al. \cite{Liu-2008} present an anaysis on how $n$-qubit controlled unitaries can be realised by one-qubit and \verb|CNOT| gates using exponential and polynomial complexity respectively. Our propositions, however, analyse the complexity of architectures implementing a quantum walk, not simply multi-qubit controlled gates.

As evident from Propositions \ref{prop:gatescnot} and \ref{prop:gatesrot}, the number of gates needed by the rotational implementation increases exponentially faster than for the generalised inverters. Of course, since the exact same circuit is repeated after every coin-flip, for arbitrary coin-flips, $t$, and the same state space $N$ the number of gates is $t$ times the number of gates needed for each implementation.

By studying the propositions we can draw some conclusions regarding the complexity of each approach. For a small state space, the generalised inverters implementation is more efficient in terms of gates, with the efficiency increasing with $N$. On the other hand, the inverters approach is less efficient in terms of workspace size, quickly surpassing the number of qubits that classical computers can simulate, or the capacity of near-term quantum machines.

\begin{figure*}[!t]
    \begin{tabular}{cc}
          \includegraphics[width=6.5cm]{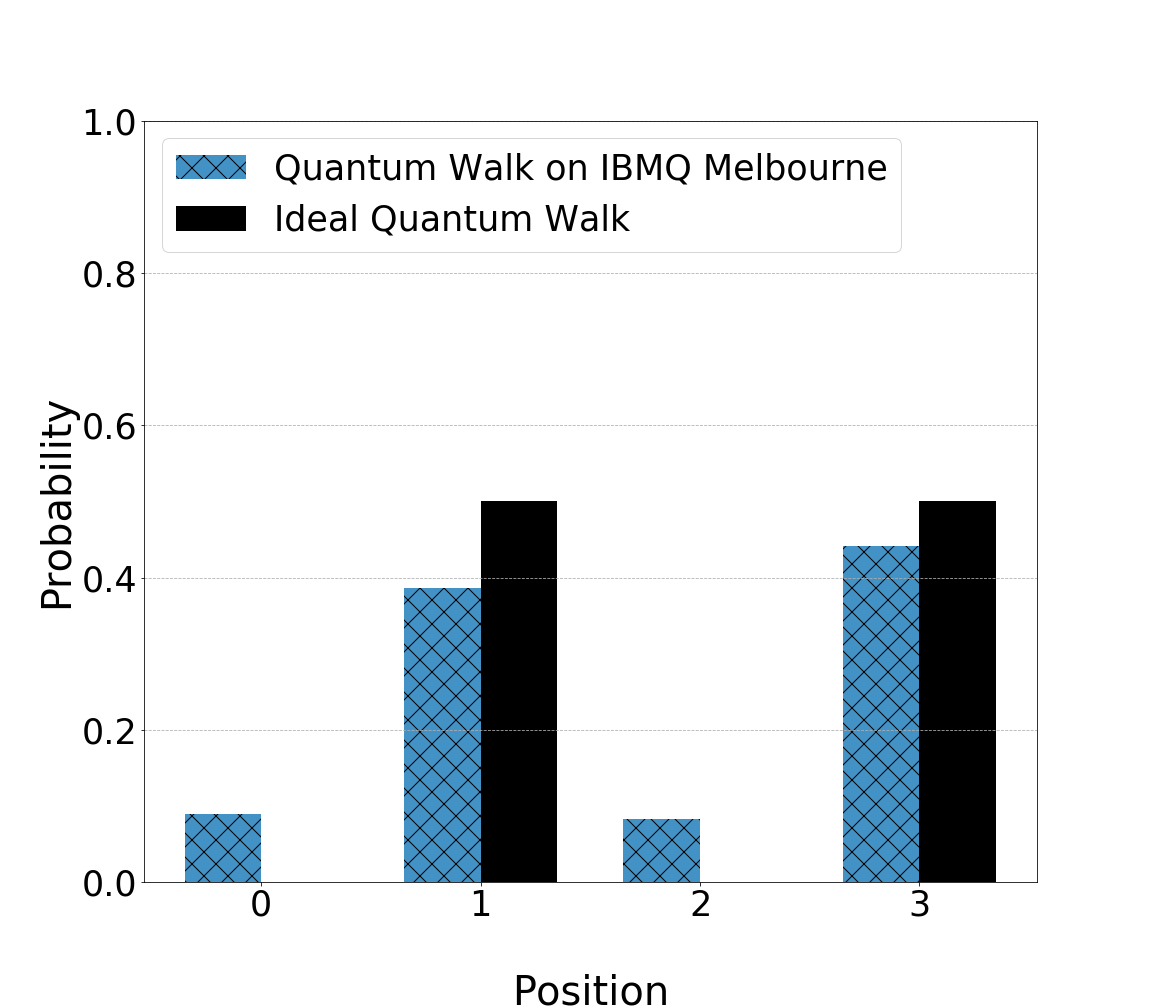} & \hspace{3em} \includegraphics[width=6.5cm]{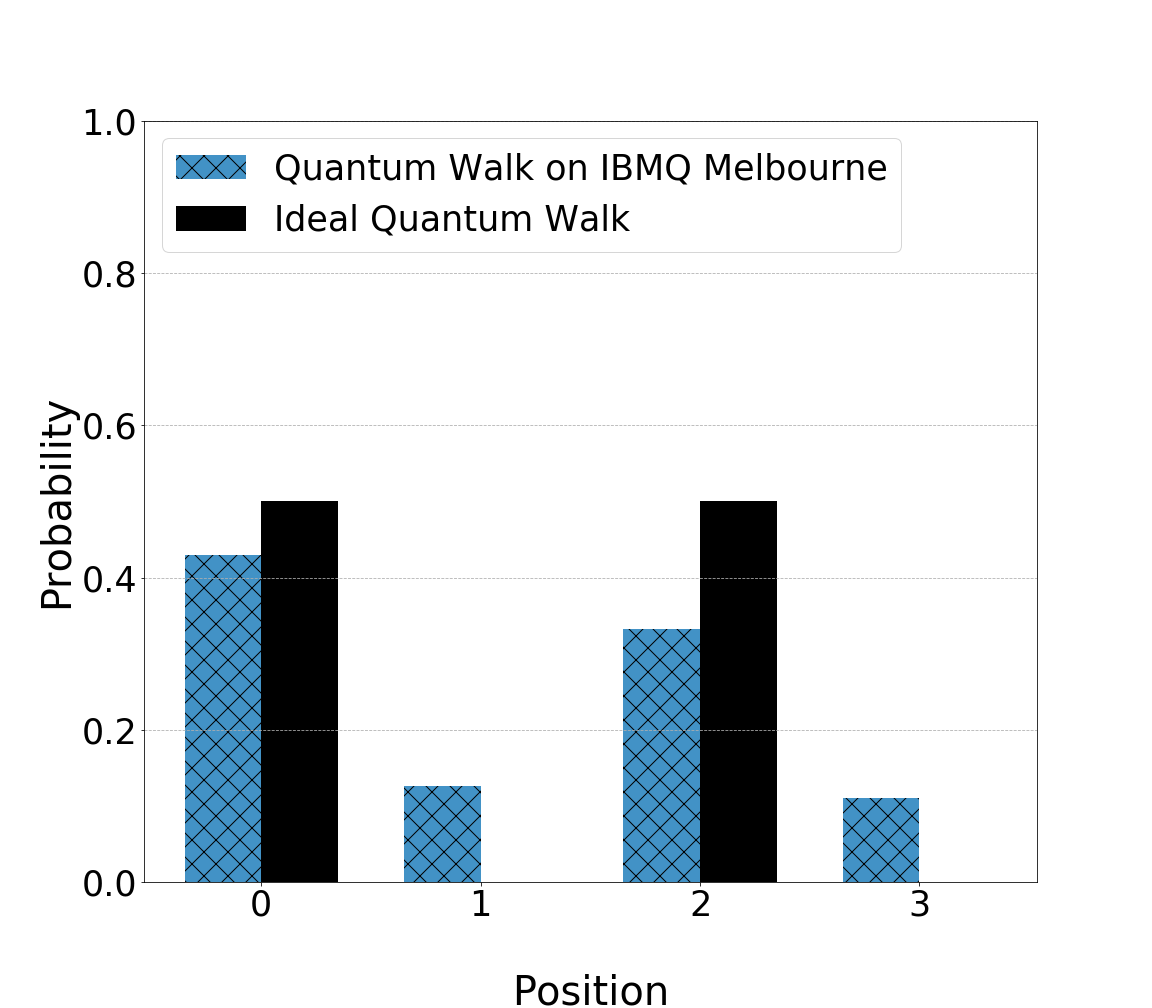} \\
        (a) & \hspace{3em} (b) \\[6pt]
          \includegraphics[width=6.5cm]{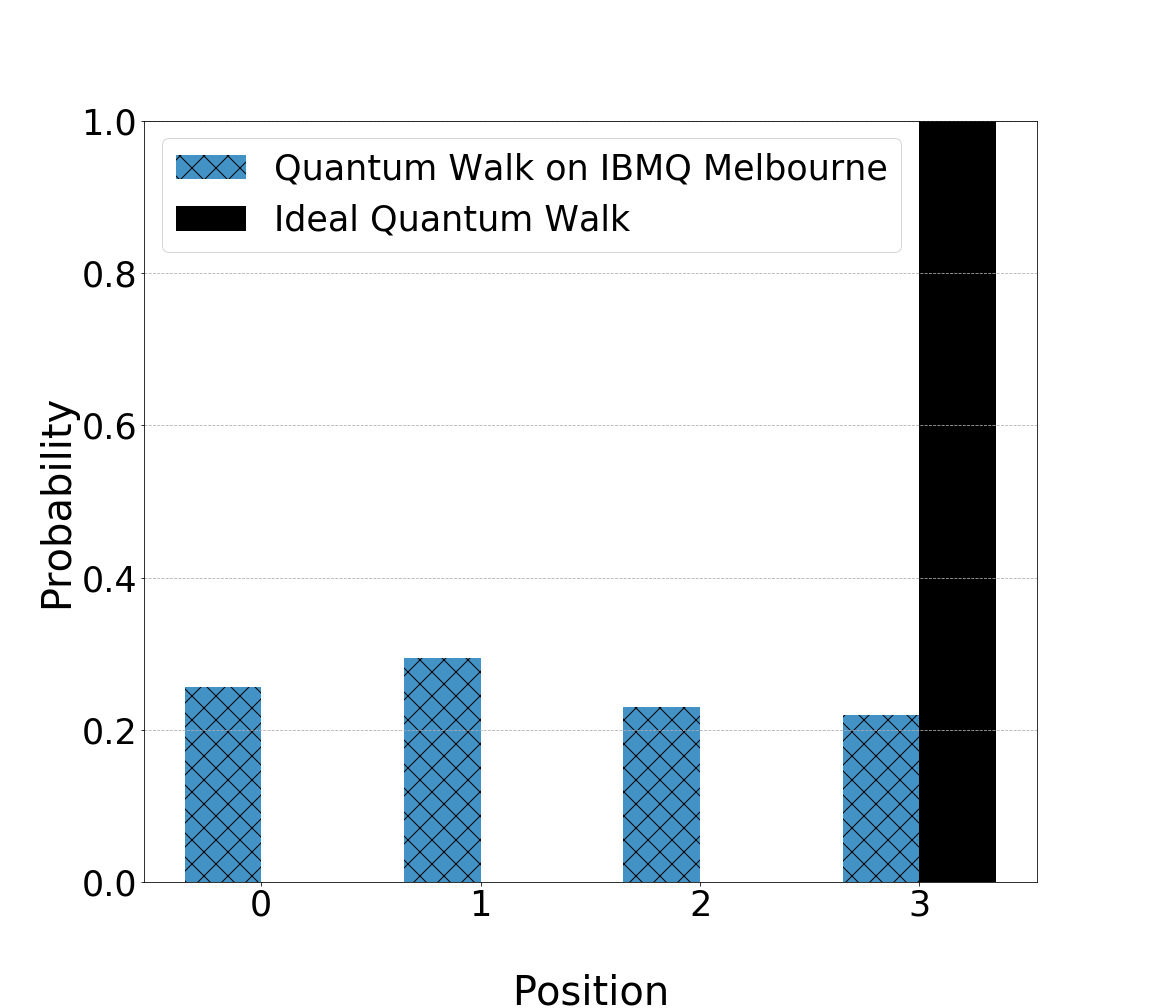} & \hspace{3em}
          \includegraphics[width=6.5cm]{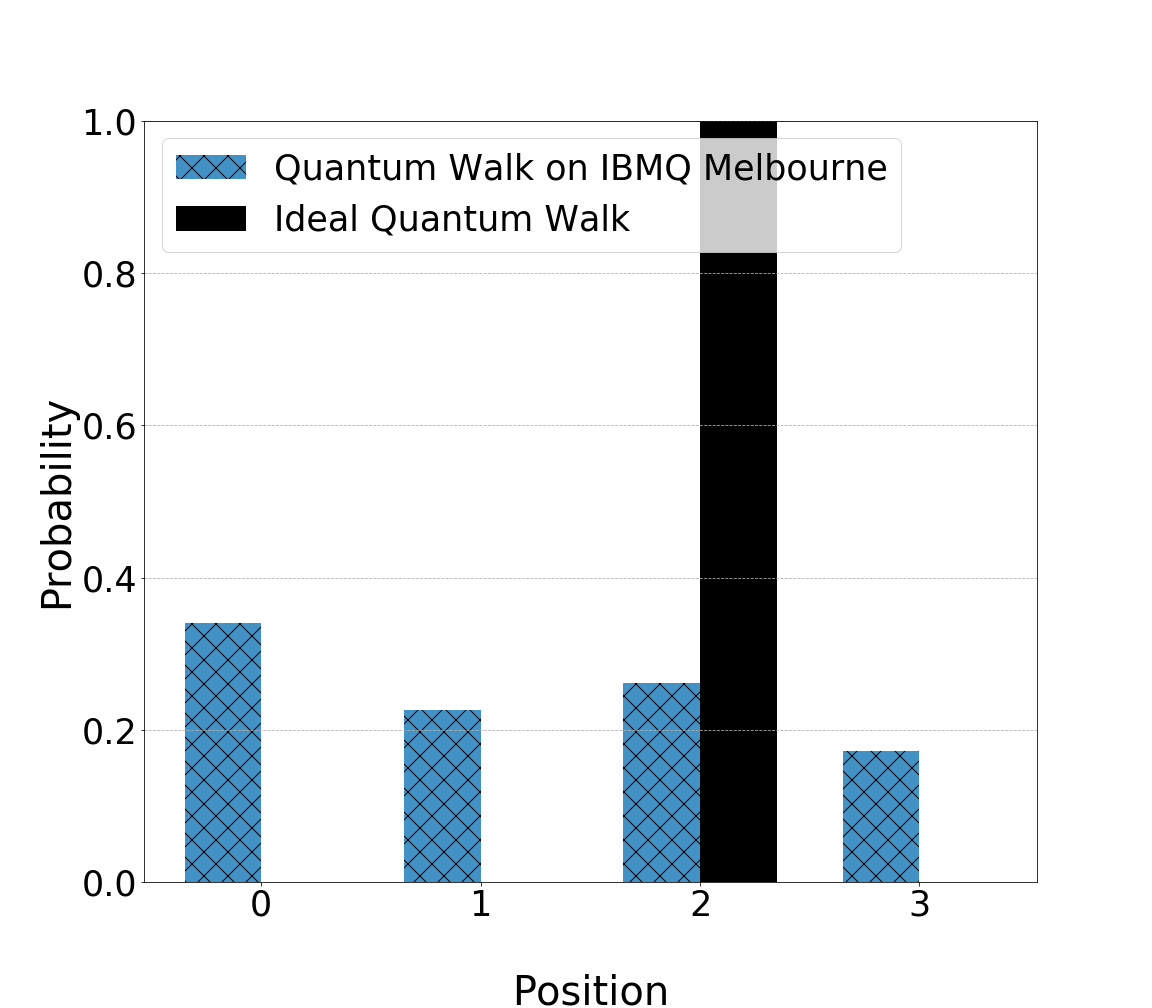} \\
        (c) & \hspace{3em} (d) \\[6pt]
    \end{tabular}
    \caption{(Color online) Probability distributions of two-qubit quantum walks on the IBMQ Melbourne computer (crossed bar) and on an ideal simulator (solid bar) for (a) one step, (b) two steps, (c) three steps and (d) four steps, with generalised inverters approach (rotations are not needed); $95\%$ confidence intervals are smaller than $10^{-3}$, hence are not displayed.}
    \label{fig:2qQW}
\end{figure*}

\section{Experimental Results}\label{sec:res}
In this section we present the results from running quantum walk experiments on $4$- and $8$-cycles as a simulation on a classical machine and on the quantum computer. The state space is of size two and three qubits respectively. We initialise the walker on position $\ket{0}$.

The results of the simulations and experiments for both approaches on two and three qubit states are given in Figures \ref{fig:2qQW} and \ref{fig:3qQW} respectively. We first discuss the results of the classical simulations before moving to the quantum computer. The code is available at \url{https://github.com/pzuliani/qwalk}.

\subsection{Experiments on Noise-free Simulator}\label{subsec:simres}
The simulations were run on a MacBook Pro $2017$ computer with a $2.3$ GHz Intel Core i5 processor and $16$ GB of memory. There are three important points to observe here: (i) the asymmetry of the probability distributions, (ii) the modular behaviour and (iii) the variance of the quantum walk. The effects of the asymmetry can be seen as the imbalance in the ideal probability distribution, as for example in Figure \ref{fig:3qQW}(c) with state $\ket{7}$ appearing with higher probability than the rest.

\begin{figure}[!ht]
    \begin{tabular}{c}
          \includegraphics[width=6.85cm]{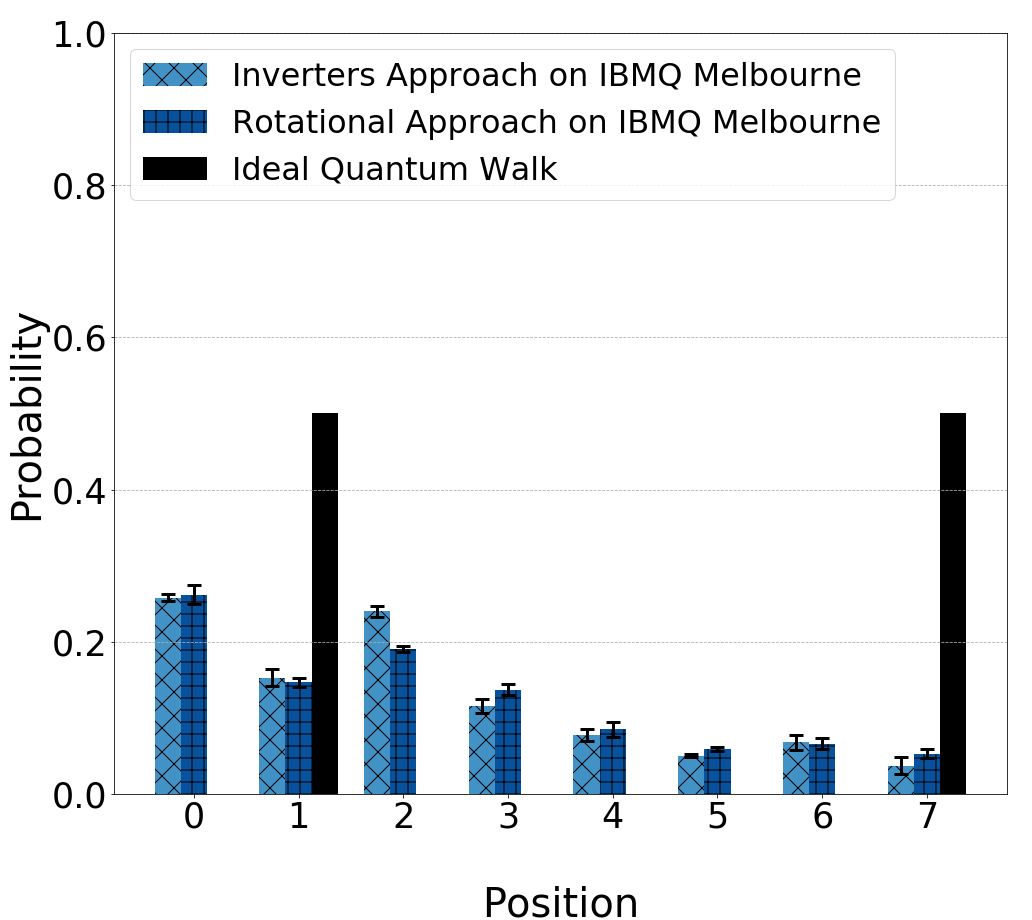} \\
          (a) \\
          \includegraphics[width=6.85cm]{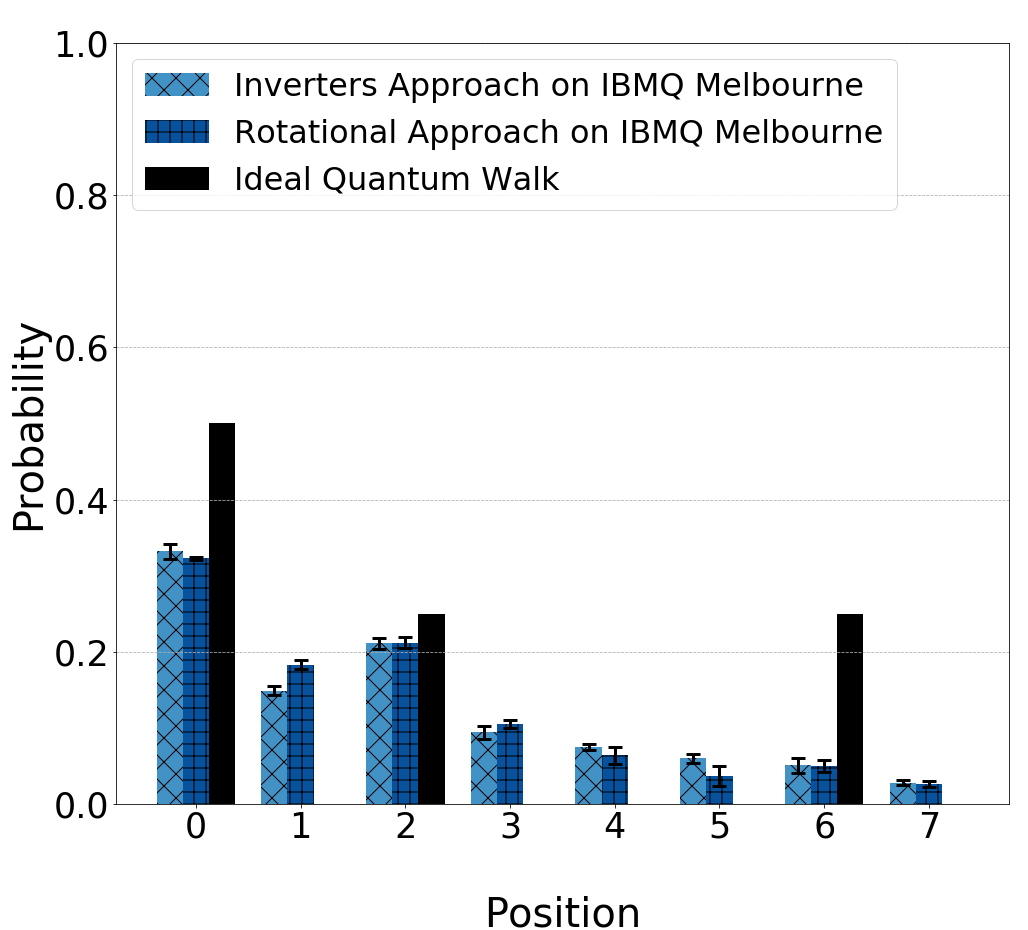} \\
          (b) \\
          \includegraphics[width=6.85cm]{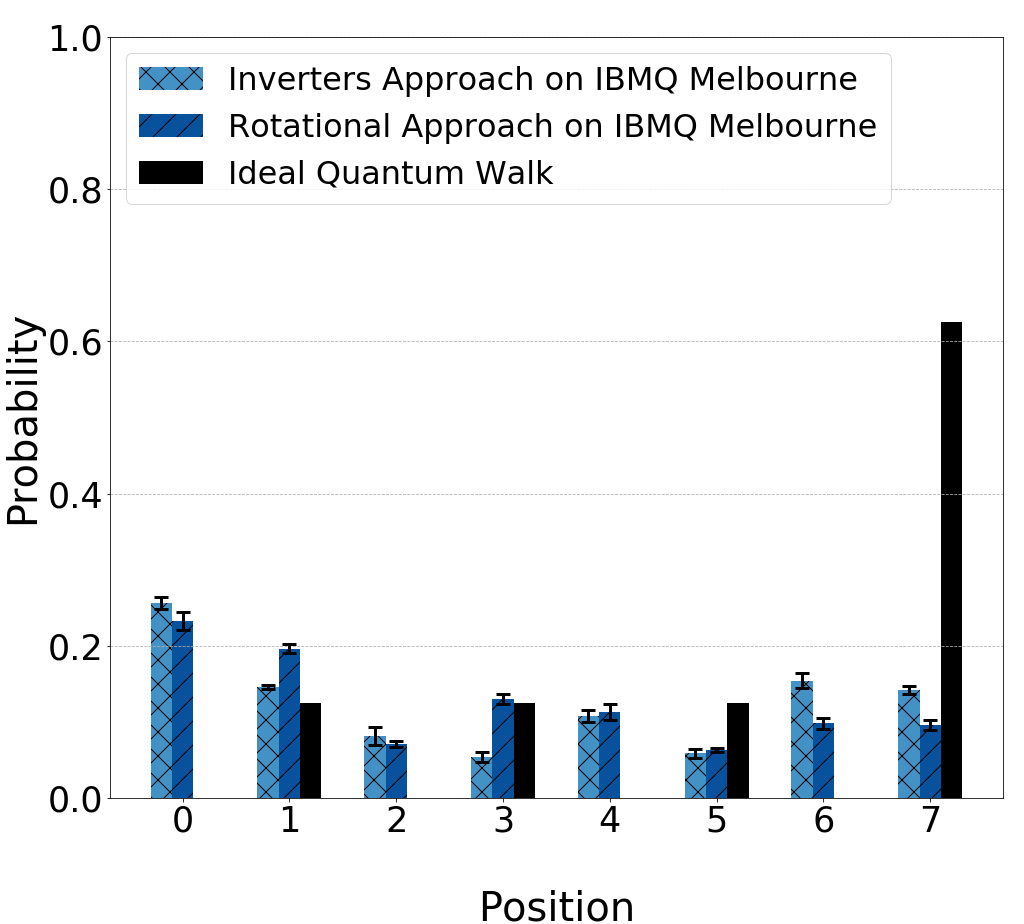} \\
          (c)
    \end{tabular}
    \caption{(Color online) Probability distributions of quantum walks on three qubits for (a) one step, (b) two steps and (c) three steps using IBMQ's Melbourne machine for the generalized inverter (crossed bar) and rotational (tiled bar) approach and an ideal simulator (solid bar), on which both approaches are identical. Error bars are calculated with $95\%$ confidence intervals.}
    \label{fig:3qQW}
\end{figure}

The second point involves the modularity of the quantum walk. Having initialised the walker on state $\ket{0}$ (considered an even state) and evolving for an odd or even number of steps, we predict that the measurement outcome will be an odd or even one respectively. Indeed that is the case, with the outcomes measured also satisfying that only $N/2$ states are observed. This stands true for both approaches.

The final point refers to the variance of the quantum walk. It has been proven that Markov chains show near-quadratic increase in the variance with respect to time as opposed to their classical analogues \cite{Szegedy}. Theoretically, the variance, $\sigma^{2}$, as a function of the coin flips, $t$, can be calculated as \cite{Orthey-2017}
\begin{equation}
    \sigma^{2} = \frac{\sqrt{2} -1}{2} t^2  \approx 0.2 \times t^2 \label{eq:var}
\end{equation}
By computing the simulated quantum walk variance we can verify this quadratic tendency for both implementations, as depicted in Figure \ref{fig:var}, concluding that both our quantum circuits are likely to be correct implementations of a quantum walk.

\begin{figure}[!tb]
    \begin{center}
        \includegraphics[width=8cm]{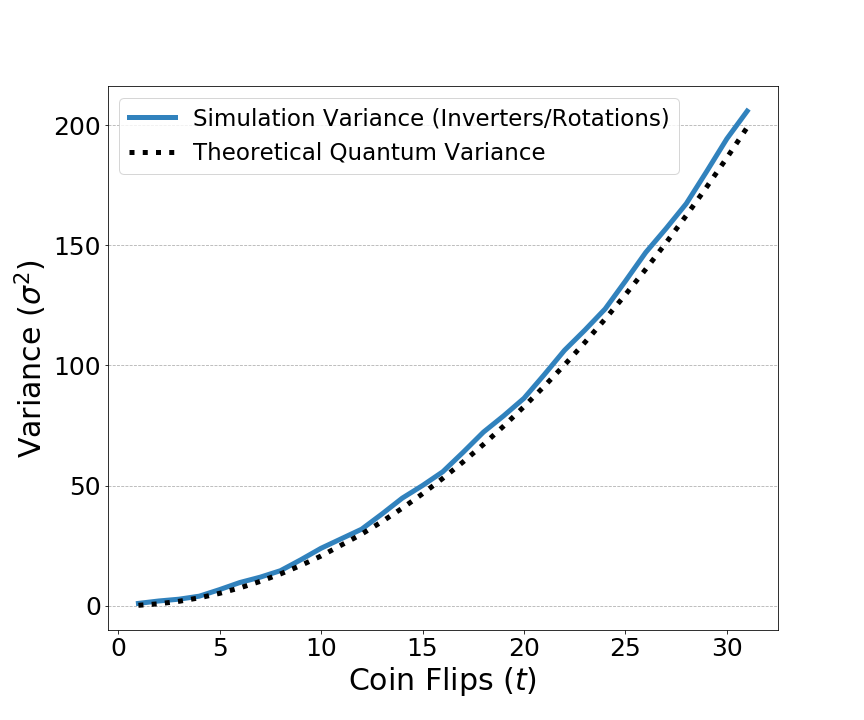}
    \end{center}
    \caption{(Color online) Variance of the quantum walk as a function of the coin flips. Here the simulation variance (solid line) is the one computed on a $N=256$-cycle via the simulations of either the inverter or the rotational circuit. Theoretical variance (dotted line) is calculated from equation \eqref{eq:var}.}
    \label{fig:var}
\end{figure}

Finally, the simulation runtimes for different number of qubits on an $8$-cycle for the two approaches are presented in Figure \ref{fig:runtimes}. It is visible that the runtime of the quantum walk circuit increases exponentially with the number of qubits. For the generalised inverters approach, our classical machine is unable to simulate the circuit for $n>16$ due to lack of memory. On the other hand, the rotational approach is able to simulate the quantum walk for state spaces larger than $16$ qubits.

\subsection{Experiments on Quantum Computer}\label{subsec:qcres}
The results we obtain by executing the experiments on the quantum processor are quite different. Both the quantum walk experiments are executed on IBMQ's $15$-qubit Melbourne machine. Due to limitations on the number of iterations we can submit to the machine, we repeat the quantum walk $1,000$ times in what constitutes a batch of trials. We repeat such batches of trials $100$ times and muster the probability distributions from the average results of the $100,000$ repetitions of the experiment. The resulting probability distribution for three steps of the quantum walk on a $4$- or $8$-cycle can be seen in Figure \ref{fig:2qQW} and Figure \ref{fig:3qQW} respectively. It is noteworthy here that since for a quantum walk on a $4$-cycle there are no inverters with more than two control qubits, a rotational implementation is not needed.

We can see from Figure \ref{fig:2qQW} and Figure \ref{fig:3qQW} that the empirical distributions differ greatly from the simulations. None of the properties that we expect to see from a quantum walk are present. More specifically, the process no longer exhibits modular behaviour as we see states that should not occur, especially on the walks with larger state space or number of steps. Unfortunately, the resulting probability distributions are completely different to the theoretical ones, making it difficult to correctly calculate the variance. Thus, no remarks regarding the variance can be made with certainty.

\begin{figure}[!ht]
    \begin{center}
        \includegraphics[width=8cm]{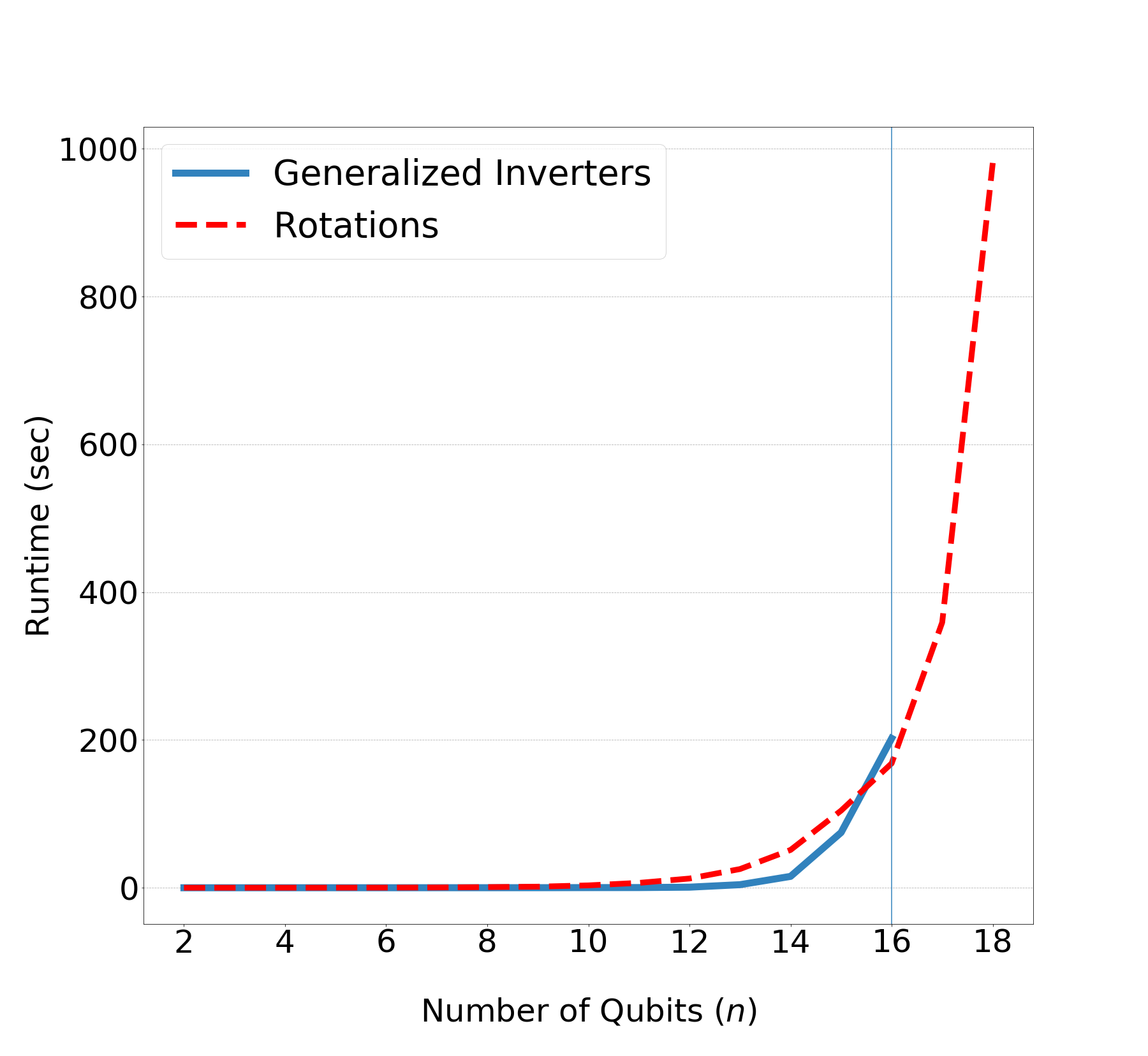}
    \end{center}
    \caption{(Color online) The simulation runtimes for the generalized inverters (solid line) and rotational (dashed line) implementations. For $n>16$ our classical machine is unable to simulate the generalised inverter implementation. Simulations are run for one step. The number of qubits $n$ only refers to the state space of the walk.}
    \label{fig:runtimes}
\end{figure}

In another paper, Balu et al. \cite{Balu-2018} realise a topological quantum walk on a $4$-cycle for one step of the walker. This implementation differs from both approaches we are concerned with in our work. The results capture the topological effects of the quantum walk at a single step, but fail to do so for more steps due to noise. Qualitatively, the results of \cite{Balu-2018} match the theoretical expectations for one step of the walk in a way similar to ours.

Taking into account the nominal execution time of the different gates that participate in the circuits, as given by IBMQ, we can calculate the overall runtime of the quantum computations. For one step of the walk with three-qubits state space, the generalised \verb|CNOT| circuit's execution time is approximately $42$ microseconds and for the rotational circuit $76$ microseconds. Similarly, for one and two steps of the walk on two qubits, the execution times are $10.5$ and $21$ microseconds respectively. These execution times are purely for the operations themselves, meaning we do not take into account the state preparation (if it occurs), the measurement or the buffer time between operations on the same qubit.

In general we find that shorter quantum walk circuits, for example one or two steps of a two-qubit quantum walk, can generally provide results closer to our expectations, as shown in Figure \ref{fig:2qQW}(a-b). This is due to the low execution time of the quantum circuit, as calculated above, compared to the average coherence time of the three qubits participating in the circuit (i.e. $64.25$ microseconds) and the smallest decoherence time among these qubits (i.e. $56$ microseconds). For more than two steps we find that the distribution starts to deviate from the expected (Figure \ref{fig:2qQW}(c-d)).

As shown in Figure \ref{fig:3qQW}(a-c), the effects of the noise remain intense for both approaches on an $8$-cycle. The average coherence of the qubits participating in the circuit is found to be $58.6$ microseconds for the six qubits in the inverters approach and $63.6$ microseconds for the four qubits of the rotational approach.

\begin{figure}[!tb]
    \begin{center}
        \includegraphics[width=8.6cm]{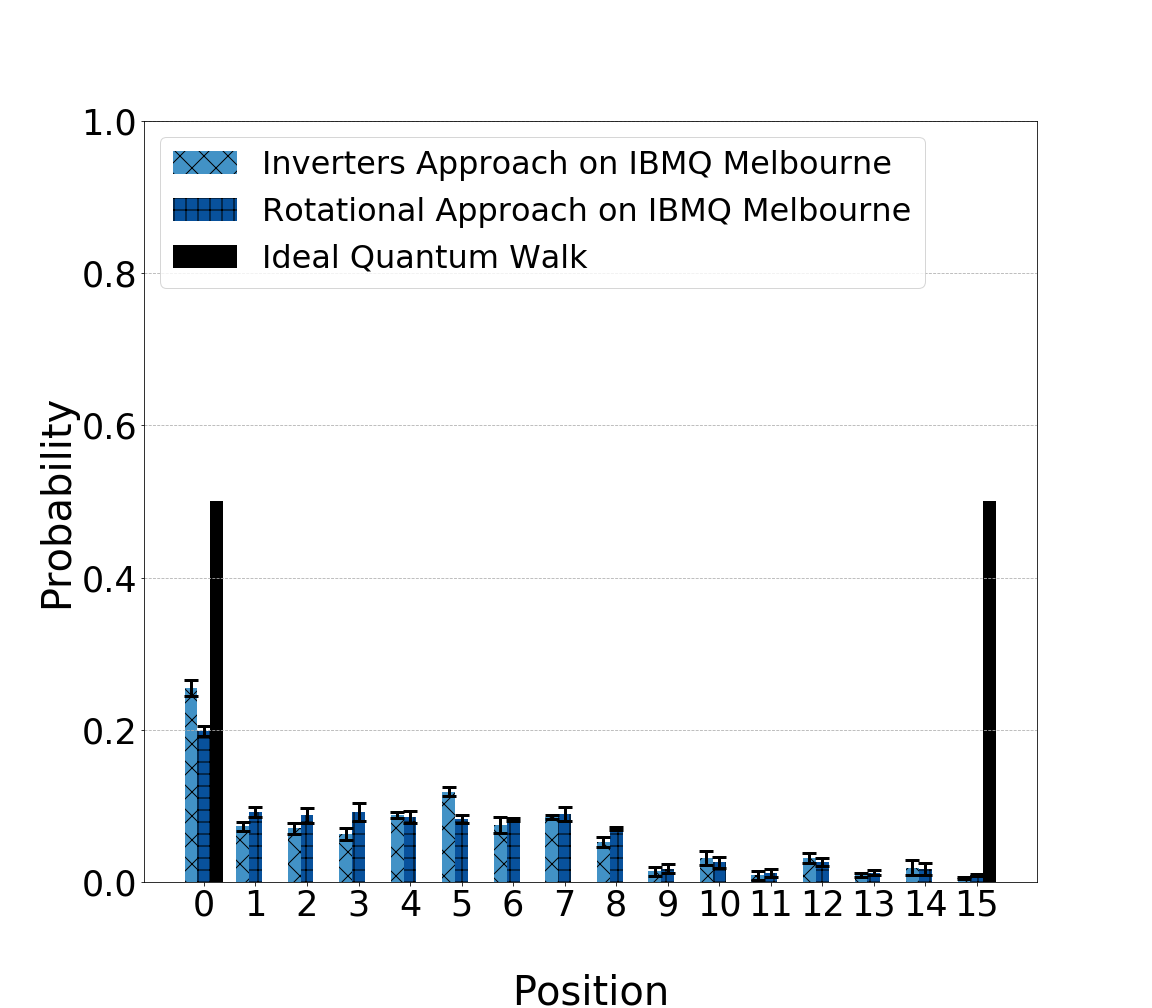}
    \end{center}
    \caption{(Color online) Probability distributions of generalized inverter (crossed bar), rotational (tiled bar) and ideal (solid bar) quantum walks on four qubits for one step. Error bars are calculated with $95\%$ confidence intervals.}
    \label{fig:4qQW}
\end{figure}

Finally, similar results are obtained on larger state spaces. One example is one step of the walk on a four qubit state space, shown in Figure \ref{fig:4qQW}. Thus, we can conclude that, for one or two steps on a two qubits state space, the quantum walk behaves relatively close to the expectation, whereas for more than three steps or for a three-qubit state space, where the runtime of the circuit is longer and approaches the coherence time, the effects of the noise are overwhelming.

\section{Discussion}\label{sec:discussion}
Throughout this paper, two approaches are used to implement a quantum walk: (i) the generalised inverters \cite{EffWalk}, and (ii) the rotational approach. Our experiments show that using generalised inverters keeps the implementation simple and the circuit shallow but requires an ancilla register. The number of ancilla qubits increases linearly with the number of control qubits quickly leading to a large workspace, limiting the capabilities of our experiment. The rotational approach deals with this limitation by rendering the ancilla register obsolete, allowing us to experiment with a much larger state space for our quantum walks. The disadvantage of the rotational approach is the larger complexity of the resulting circuit.

It is evident from our experiments that the two implementations of quantum walks offer opposite advantages and disadvantages. Implementing a quantum walk with generalized inverters shows smaller execution time and requires exponentially less gates as a function of the size of the state space, or in other words, smaller circuit depth. On the other hand, the rotational circuit requires less qubits to be used in the workspace, as there is no need for an ancilla register, but the circuit is much deeper than the first approach.

\begin{figure}[!tb]
    \begin{center}
        \includegraphics[width=8cm]{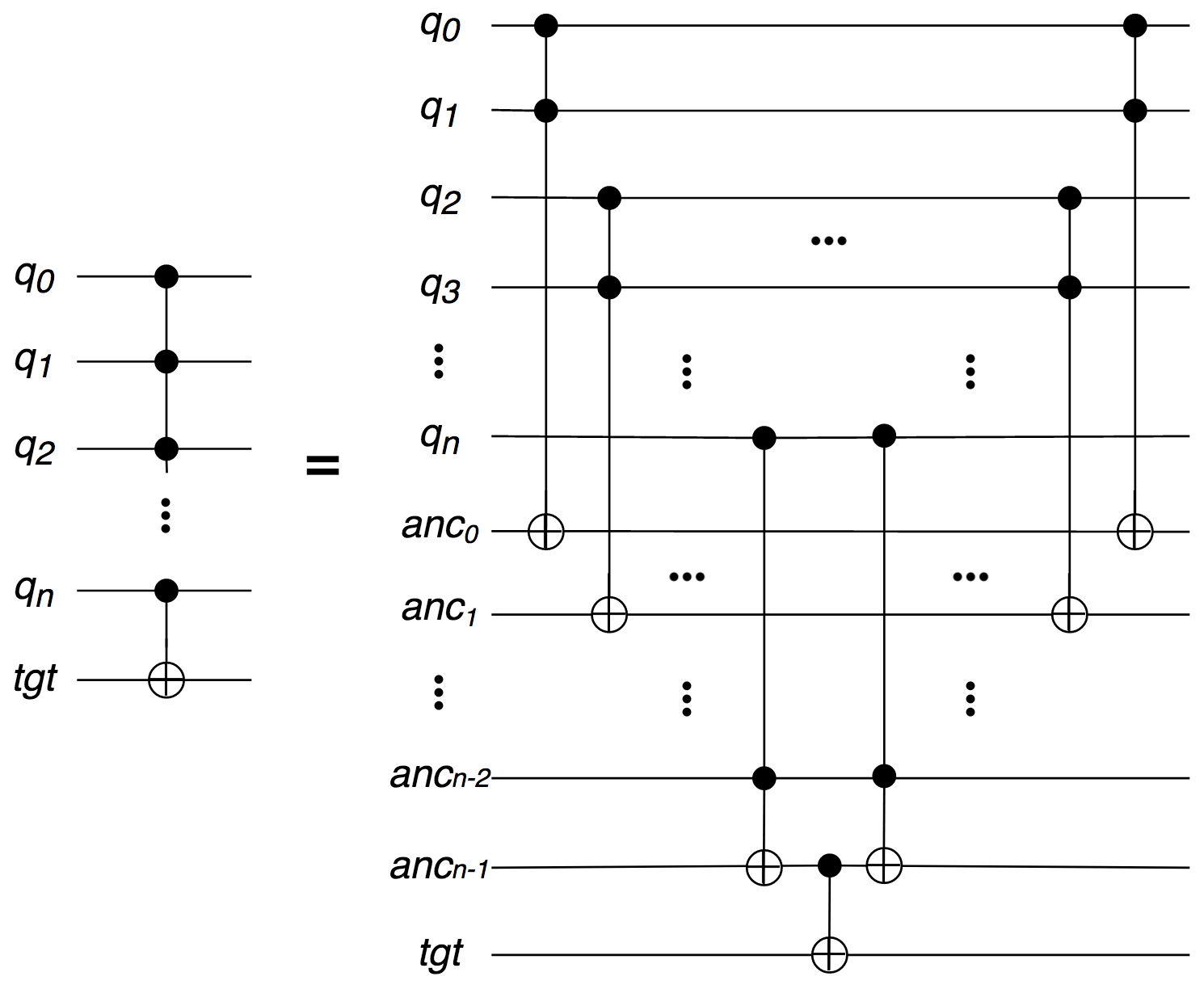}
    \end{center}
    \caption{Generalised Toffoli gate with $n$ control qubits ($q_{0}$ to $q_{n}$), $n-1$ ancilla qubits ($anc_{0}$ to $anc_{n-1}$) and one target qubit ($tgt$).}
    \label{fig:GenTof}
\end{figure}

The dependency of the operations between the qubits within the architectures does not allow for any gates to be run in parallel, restricting the width of the circuit at each time step. Thus, the runtime of the quantum circuit quickly surpasses the coherence time of the qubits, leading to immensely noisy distributions. For very small state spaces (i.e. two qubits) we see distributions closer to the theory with lower level of noise. This does not hold for walks with a three-qubit state space or larger. The execution time of the circuit with relation to the coherence time of the qubits greatly affects the resulting distribution of the quantum walk.

Due to the stochastic nature of the noise, it is very difficult to draw safe conclusions from the comparison of the two approaches on the quantum computer. We can, however, point out that since the rotational circuit is deeper, the cumulative error due to hardware infidelities will be more extensive. Additionally, since the computation also takes longer in the case of the rotational circuit, the active qubits have higher chance to decohere. Thus, we believe it is safe to claim that the rotational circuit will be the noisier of the two.

As for the comparison between the two approaches, we can streamline the discussion using the quantum volume \cite{Moll-2018}, which is an architecture-neutral figure of merit that showcases the performance of a quantum computer when running quantum circuits. It depends on the number of physical qubits in the machine, the number of qubits utilized by the circuit, the depth and width of the circuit, as well as the average effective error rates of the two-qubit gates implemented in the quantum computer.

Following the above discussion, we can calculate the quantum volume that each circuit would require to be run on the same quantum architecture. More specifically, considering as an example one step of a quantum walk with a three qubits state space on IBMQ's Melbourne machine, the generalized inverters approach requires a quantum volume of approximately $20.812$, whereas the rotational approach requires $28.905$ (for more information on the calculation of the quantum volume for the three qubit example see Appendix \ref{ap:qvol}). In a similar way we can calculate the volume for the larger quantum walk circuits. Thus, we can safely conclude that the generalized inverters approach would require the smaller quantum volume of the two implementations due to the much smaller circuit depth and subsequently reduced execution time compared to the rotational approach, as well as lower the cumulative effective error due to hardware infidelities.

It is noteworthy that, to the best of our knowledge, no optimal implementation of quantum walks currently exists. The generalized inverters approach proposed by \cite{EffWalk} constitutes an efficient circuit. Other implementations might exist that could perhaps prove to be more efficient and could be the subject of further comparison.

Finally, due to large presence of noise in the quantum machine, we cannot draw firm conclusions regarding today's quantum hardware. In this paper we have worked out a resource theory for and compared two different implementations of quantum walks. Therefore, considering the limitations and hardware constraints of NISQ machines, our analysis can assist on the realisation of quantum walks on near-term quantum computers in an efficient way.

\section{Acknowledgements}
This work was supported by the Engineering and Physical Sciences Research Council, Centre for Doctoral Training in Cloud Computing for Big Data [grant number EP/L015358/1].

\appendix

\section{Generalised Inverter Gate for Arbitrary Control Qubits}\label{ap:cnx}
In many circumstances, we need to control an inversion with an arbitrary number of $n_{c}>2$ control qubits. A solution can be given by introducing intermediate computations, with their results stored in an ancilla register of size $n_{c}-1$ \cite{NielChu}. A visualization of this solution for a generalised \verb|CNOT| gate with $n_{c}$ control qubits is shown in Figure \ref{fig:GenTof}. This decomposition of the generalised \verb|CNOT| gate can be further simplified to use just regular \verb|CNOT| operations.

\section{Realisation of Rotation Operations with IBM's \texorpdfstring{$U_{3}$}{TEXT} Gate}\label{ap:u3}
Here we show how IBM Qiskit's \cite{Qiskit} $U_{3}$ gate can be used to create a rotation operator of the form $R_{y}(\theta)$. The $U_{3}(\theta,\phi,\lambda)$ gate is a single-qubit gate with three Euler angles. The gate implements the following operator
\begin{equation}
    U_{3}(\theta,\phi,\lambda) = 
    \begin{pmatrix}
        \cos(\theta/2) & -e^{i\lambda}\sin(\theta/2) \\
        e^{i\phi}\sin(\theta/2) & e^{i\lambda+i\phi}\cos(\theta/2) 
    \end{pmatrix} \label{eq:u3}
\end{equation}
For our case we can assign $\theta=\pi/2, \phi=\lambda=0$, thus getting
\begin{equation*}
    R_{y}(\pi/2) = U_{3}(\pi/2,0,0),
\end{equation*}
with matrix representation easily deducted from Equation \eqref{eq:u3}.

\section{Calculating the Quantum Volume}\label{ap:qvol}
The quantum volume, as introduced by Moll et al. \cite{Moll-2018}, is a figure of merit that characterizes the performance of a quantum computer. In the context of this paper, we use the quantum volume in order to attach an additional metric that shows the required resources of each quantum walk approach on IBMQ's $15$-qubit Melbourne machine.

The quantum volume can be defined as \cite{Moll-2018}
\begin{equation}
    V_{Q} = \max_{n<N} \left( \min \left[ n,\frac{1}{n\times\epsilon_{\text{eff}}(n)} \right]^{2} \right) \label{eq:qvol}
\end{equation}
where $n$ is the size of the workspace necessary for the computation, $N$ is the number of qubits within the quantum computer and $\epsilon_{\text{eff}}(n)$ is the average effective two-qubit gate error rates of the qubits that participate in the circuit, following the connectivity of the architecture.

Thus, we can now easily calculate the quantum volume of the three qubit experiment we use as an example. It is important here that, due to automatic optimization of the connectivity before execution in the computer, additional qubits that store intermediate quantum states have to be taken into account when calculating the quantum volume. Subsequently, the size of the workspace for the generalized inverters approach is $n_{c} = 8$ and for the rotational us $n_{r} = 6$. For the qubits that participate in the workspace of each approach, we can compute the average error rates as $\epsilon_{\text{eff}}^{c} = 2.74 \times 10^{-2}$ and $\epsilon_{\text{eff}}^{r} = 3.10 \times 10^{-2}$ for the inverters and rotational approach respectively, as of the day of the experiment.

Substituting the above values in equation \ref{eq:qvol}, we compute the quantum volume for the generalized inverters approach as $V_{Q}^{c} = 20.812$ and for the rotational approach as $V_{Q}^{r} = 28.905$. In a similar fashion we can calculate the quantum volume for a quantum walk of arbitrary size.

\bibliography{main}{}

\end{document}